\DeclareMathOperator*{\argmax}{arg\,max}
\def\expect{{\mathbb{E}}}
\newcommand{\BlackBox}{\rule{1.5ex}{1.5ex}}  
\newenvironment{proof}{\par\noindent{\bf Proof\
}}{\hfill\BlackBox\\[2mm]}
\newtheorem{lemma}{\bf{Lemma}}
\newtheorem{proposition}{Proposition}
\newtheorem{corollary}{Corollary}
\newtheorem{assumption}{Assumption}
\newcommand{\be}{\begin{equation}}
\newcommand{\ee}{\end{equation}}
\newcommand{\bea}{\begin{eqnarray*}}
\newcommand{\eea}{\end{eqnarray*}}
\def\Lddots{\mathinner{\mkern1mu\raise17\p@\vbox{\kern17\p@\hbox{.}}\mkern2mu
    \raise8\p@\hbox{.}\mkern2mu\raise\p@\hbox{.}\mkern1mu}}
\outer\def\subsect#1\par{\vskip12pt
plus.07\vsize\penalty-250\vskip0pt plus-.07\vsize
\bigskip\vskip\parskip\message{#1}
\vbox{\smash{\lower9pt\hbox{\kern-8pt\epsfbox{shadedbox.eps}}}}\vskip-\baselineskip
\leftline{\large\bf#1}\nobreak\medskip}
\def\BibTeX{{\rmfamily B\kern-.05em{\scshape i\kern-.025em b}\kern-.08em \TeX}}
\newcommand{\A}{\mathcal{A}}
\newcommand{\N}{\mathcal{N}}
\newcommand{\bB}{\mathbf{B}}
\newcommand{\bX}{\mathbf{X}}
\newcommand{\bs}{\mathbf{s}}
\newcommand{\bx}{\mathbf{x}}
\def\expect{{\mathbb{E}}}
\newcommand*{\coloneq}{\mathrel{\rlap{%
                     \raisebox{0.3ex}{$\m@th\cdot$}}%
                     \raisebox{-0.3ex}{$\m@th\cdot$}}%
                     =}
\title{The Cognitive Compressive Sensing Problem}
\author{Saeed Bagheri and Anna Scaglione\\
Electrical and Computer Engineering Department\\
University of California, Davis\\
Email: \{sabagheri, ascaglione\}@ucdavis.edu}
\begin{document}

\maketitle
\begin{abstract}{

In the Cognitive Compressive Sensing (CCS) problem, a Cognitive Receiver (CR)  seeks to optimize the reward obtained by sensing an underlying $N$ dimensional random vector, by collecting at most $K$ arbitrary projections of it. The $N$ components of the latent vector represent sub-channels states, that change dynamically from ``busy'' to ``idle'' and vice versa, as a Markov chain that is biased towards producing sparse vectors. 
To identify the optimal strategy we formulate the Multi-Armed Bandit Compressive Sensing (MAB-CS) problem, generalizing the popular Cognitive Spectrum Sensing model, in which the  CR can sense $K$ out of the $N$ sub-channels, as well as the typical static setting of Compressive Sensing, in which the CR observes $K$ linear combinations of the $N$ dimensional sparse vector. The CR opportunistic choice of the sensing matrix should balance the  desire of revealing the state of as many dimensions of the latent vector as possible, while not exceeding the limits beyond which the vector support is no longer uniquely identifiable. 
}
\end{abstract}

\begin{keywords}
Opportunistic access, multi-channel sensing, cognitive radio, compressive sensing, myopic policy.
\end{keywords}

\section{Introduction}

The multi-armed bandit (MAB) problem models the situation of an agent whose intent is maximizing his long term reward by strategically choosing an {\it arm}, that corresponds to a possible reward. 
A popular application of this framework has been that of Cognitive Spectrum Sensing (CSS) (see e.g. \cite{survey,Zhao,Twkin,Zhao2,Ahmad,Liu,Unnikrishnan,Ahmad2,nonBay-1,nonBay-2}). In CSS, a Cognitive Receiver (CR) can only sense $K$ out of $N$ bands, switching dynamically the bands that are filtered and down $K$ Analog to Digital Converters (ADCs), with the goal of exploiting as frequently as possible the bands left idle by the primary users. 

In our scenario, we borrow the Bayesian formulation of the problem, in which the transition of the channels from ``busy" to ``idle"  is a Markov chain (typically, $N$ independent two state Markov chains) with a known transition probability matrix \cite{Zhao,Twkin,Zhao2,Ahmad,Liu,Unnikrishnan,Ahmad2}. In this case, the model falls in the class of restless multi-armed bandit (RMAB) problems \cite{Zhao2,Ahmad,RMAB-1}.  
There is also a non-Bayesian formulation of the problem \cite{nonBay-1,nonBay-2}, which is not considered in this work.  

 If idle channels are prevalent, it is also natural to attack the problem using ideas from Compressive Sensing (CS) and Finite Rate of Innovation (FRI) sampling \cite{CSS-1,CSS-2,CSS-3,CSS-4,FRI-1,FRI-2,FRI-3,FRI-4,CS-1,CS-2,CS-3} as possible receiver architectures. In this case, the receiver applies the static policy of observing $K$ linear combinations of the set of channels, and relies on the fact that sparse vectors can be recovered uniquely, even for an underdetermined system, i.e. $K<N$. For a given sensing budget $K$, the static architecture is however capable of recovering supports that are half as large as $K$ or larger. Because sparsity is not guaranteed, this receiver architecture is too inflexible to work in practice. 

{\bf Contributions}: The idea in this paper is to overcome the limitations of compressive sensing in these (and potentially in other) sensing applications by merging the MAB online learning formulation with that of CS to form a more general model, making CS an adaptive and cognitive algorithm. In our paper, we envision that the CR can opportunistically activate different linear combinations of the entries of the latent vector. Each compressive row of the sensing matrix is equivalent to an {\it arm} in the MAB problem, and the objective is to select a sensing matrix with $K$ rows and with columns that are a subset of $\{1,\ldots,N\}$ that provides the optimum long term reward. We consider a relatively general formulation of the problem with simplifying assumptions on the sensing model that help make the problem tractable and shed insights on the optimum policy structure. Specifically, we assume noiseless sensing and a sparse vector recovery algorithm that recovers the support perfectly or incurs in an erasure when the identifiability conditions are violated.  Under these assumptions, we derive optimum myopic strategies for the CR and give sufficient conditions on the state space $\{\Omega[t]\}_{t=1}^T$ under which the optimum solution is the greedy policy.

The paper is organized as follows. First, in Section \ref{sec.problem-formulation}, we formulate the Cognitive Compressive Sensing problem in a general form. Then we set the stochastic optimization framework in Section \ref{sec:StocasticOPT}. 
This section is followed by our analysis of the myopic policy and by numerical results that validate our theoretical results.

{\it Notation}: The set of real, complex and integer numbers
numbers by $\mathbb{R}$, $\mathbb{C}$ and $\mathbb{Z}$, respectively. We denote sets by calligraphic symbols,
where the intersection and the union of two sets $\mathcal{A}$ and $\mathcal{B}$
are written as $\mathcal{A}\cap\mathcal{B}$ and $\mathcal{A}\cup\mathcal{B}$, respectively. The operator $|\mathcal{A}|$ on a discrete set takes the cardinality (measure) of the set and $\mathcal{A}^c$ denotes the complement of $\mathcal{A}$, where the universal set should be evident from the context.  We denote vectors and matrices by boldface lower-case and boldface upper-case symbols. The transpose, conjugate, Hermitian (conjugate) transpose, inverse and pseudo inverse of a matrix $\bX$ are denoted by $\bX^T$, $\bX^*$, $\bX^H$, $\bX^{-1}$ and $\bX^{\dagger}$, respectively. The conventional $\ell_2$-norm is written as $\Vert\bx\Vert_2$, the $\ell_1$-norm is denoted by $\Vert\bx\Vert_1$ and $\Vert\bx\Vert_0$ is the number of non-zero entries of the vector $\bx$. The operator $\expect\{\cdot\}$ denotes the expectation operator.

\section{Problem Formulation}\label{sec.problem-formulation}
In this section, we cast the observation model in a general form, idealizing some aspects associated with the physical sensing and considering the case of a linear observation model analogous to that often used in the CS literature \cite{CS_theory}. We also describe the general structure of the stochastic optimization problem that separates the Cognitive CS policy from the prior art.

\subsection{Observation and Decision Model}
The observation model is described pictorially in Fig. \ref{fig.1}. In our model, the CR can only accrue $K$ observations per unit of time that are linear combinations of the elements of an $N$-dimensional random vector $\boldsymbol{\alpha}[t]$, whose non-zero entries are referred to as being ``active'' or ``busy'', while the zero entries are referred to as ``idle'' or ``empty''.
 \begin{figure}[!htb]
  \centering
 \centerline{\includegraphics[width=0.5\textwidth]{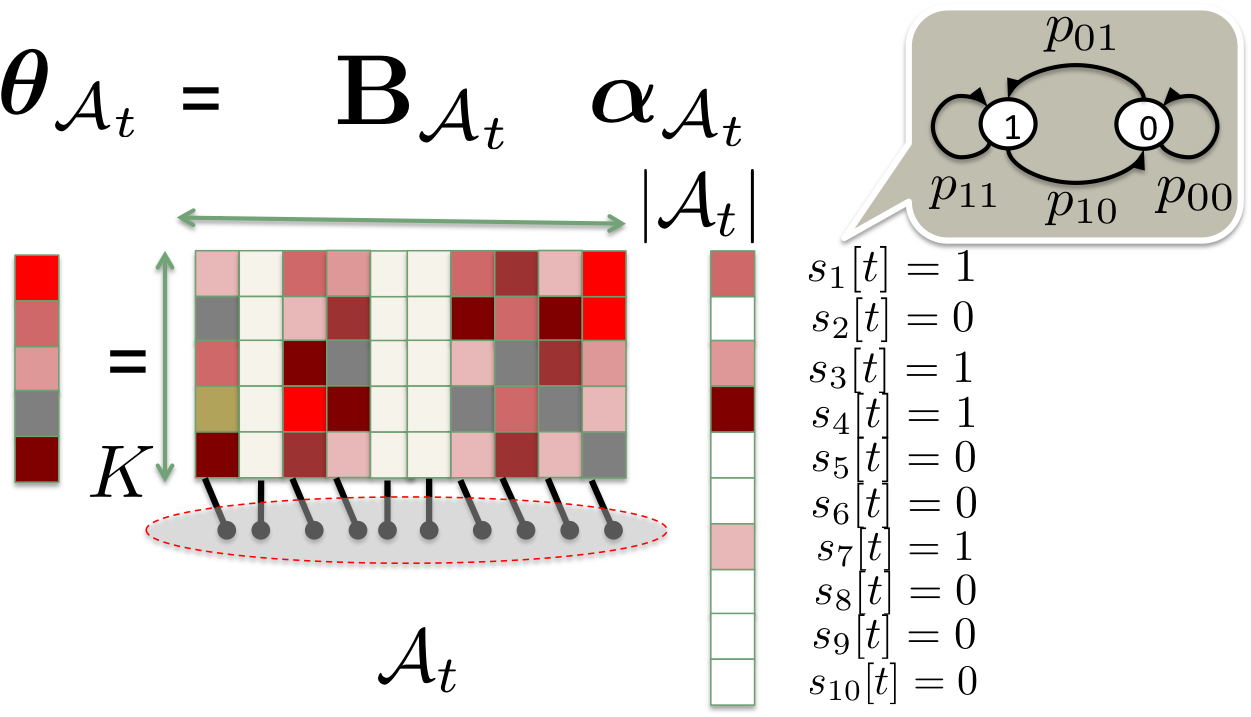}}
\caption{CCS Noiseless Observation Model.}
\label{fig.1}
\end{figure}

The statistics of $\boldsymbol{\alpha}[t]$ are governed by a state vector $\bs[t] = [s_1[t], \ldots, s_N[t]] \in \{0,1\}^N$ modeled as a vector Markov chain and the Bayesian formulation focuses on the statistics of $\bs[t]$. Denoting by $\omega_i[t]=\text{Pr}(s_i[t]=0)$, a reasonable model is that ${\alpha}_i[t]$ is that of a mixture distribution
 $f_{\alpha_i[t]}(\alpha)=\omega_i[t]f_0(\alpha)+(1-\omega_i[t])f_1(\alpha)$. In order for the CS algorithm performance to resemble what we later postulate in Assumption $2$, we assume that if $s_i[t]=0$ the entries ${\alpha}_i[t]$ are zero ($f_0(x)=\delta(x)$) and if $s_i[t]=1$, with probability one $|{\alpha}_i[t]|>\alpha^*$ ($f_1(x)=0$ for $|x|\leq \alpha^*$) where $\alpha^*>0$ and is sufficiently high for their detection. 
In particular,  $s_i[t]$ denotes the state of the $i$-th entry ${\alpha}_i[t]$ in time slot $t$, which is $0$ if the entry is ``idle'' with probability one, i.e. $\text{Pr}({\alpha}_i[t]=0|s_i[t]=0)=1$, and $1$ if the entry is ``busy'', which means that there is $a>0$ such that $\text{Pr}(|{\alpha}_i[t]|>a|s_i[t]=1)=1$. The interest in compressive sensing is associated with cases where the parameters of the model bias the vector towards being sparse.

In general, the Markov chain $\bs[t]\in\{0,1\}^N$ has $2^N$ possible states. The evolution of the states can be represented by the following algebraic equations:
\begin{align}\label{eq.MC}
\boldsymbol{\sigma}[t]&=\boldsymbol{P}\boldsymbol{\sigma}[t-1]+\boldsymbol{\nu}[t]\\
\bs[t]&=\boldsymbol{\Phi}\boldsymbol{\sigma}[t]
\end{align}
where $\boldsymbol{\Phi}$ is an $N\times 2^N$ matrix that contains all possible binary sequences, $\boldsymbol{P}$ is the transition probability matrix, $\boldsymbol{\sigma}[t]$ is the state, that can be one of the coordinate vectors in $\mathbb{R}^{2^N}$, $\boldsymbol{\pi}[t]=\expect\{\boldsymbol{\sigma}[t]\}$ is the state probability at time $t$
and the martingale $\boldsymbol{\nu}[t]=\boldsymbol{\sigma}[t]-\boldsymbol{P}\boldsymbol{\sigma}[t-1]$ allows to write the evolution of the Markov chain as linear dynamical state equations. 
  In our analysis, we introduce the further simplification that each sub-channel evolves as an i.i.d., two-state discrete time Markov chain with transition probabilities $p_{ij}$, $i,j \in \{0,1\}$\footnote{As far as the analysis in this paper are concerned, assuming equal transition probabilities for all $N$ Markov chains is not necessary and the main results hold true for the case with different transition probabilities. We have assumed equal probabilities only to simplify the presentation of the paper.}, which means that we reduce \eqref{eq.MC} into $N$ independent set of equations analogous to \eqref{eq.MC}, except that each of the state vectors $\boldsymbol{\sigma}_i[t]$ is $2\times 1$.   In the case of i.i.d. Markov chains, the vector $\bs[t]$ will tend to be sparse if $p_{10} > p_{01}$. 
  
 The CR objective is to collect a reward which is a function of the number of entries identified correctly as idle and the dilemma is how to ``sense'' the entries by choosing if the data budget should be spent observing individual entries directly or linear combinations of entries. Mathematically, we model the action at time $t$ as the choice of a subset $\A_t$ of columns of a $K\times N$ sensing matrix $\bB$. The action equivalently selects the $K\times |\A_t|$ matrix $\bB_{\A_t}$, where $K\leq |\A_t|\leq N$ denotes the cardinality of the subset of columns of $\bB$ with $\mathcal{A}_t \subseteq \mathcal{N}$.
 In the following, we will use interchangeably $\mathcal{A}_t $ and $\mathcal{N}$ as sets of columns for the sensing matrix as well as sets of indices $\mathcal{A}_t \subset \mathcal{N}\triangleq \{1,\ldots,N\}$. We also use the index $i$ to denote the $i$-th sensing column, and use $+$ and $-$ in lieu of $\cup$ and $\cap$ for sets.   
 We indicate the set of the parts of $\mathcal{N}$ as $2^{\mathcal{N}}$, so that $\mathcal{A}_t\in 2^{\mathcal{N}}$.  
 We introduce the following assumption:
\begin{assumption}({\bf CCS}){\it~ 
We assume that for any $\A_t$, the columns of $\bB_{\A_t}$ are drawn of a set of vectors so that any $K$ columns out of the $|\A_t|>K$ columns of $\bB_{\A_t}$ form a linearly independent set. For $|\A_t|=K$, this means simply that $\bB_{\A_t}$  is full rank. For $K<|\A_t|\leq N$, this condition ensures that  vectors $\boldsymbol{\alpha}_{\A_t}$ with sparsity $<K/2$ can be identified uniquely \cite{CS_theory}.  }
\end{assumption}  

Meeting Assumption 1 is as difficult as finding a $K\times N$ matrix  $\bB$ that has the same property
and selecting the subset $\A_t$ of its columns to form $\bB_{\A_t}$, as shown in Fig. \ref{fig.1}, where the MAB arms are the columns selected from a $K\times N$  sensing matrix. More precisely:
\begin{proposition}{\it 
 The action space for the CR is a matroid \cite{edmond} $\mathcal{M}(\mathcal{N},2^{\mathcal{N}})$, with ground set the columns of the sensing matrix associated to $\mathcal{N}$ and $2^{\mathcal{N}}$ as the collection of {\it independent} sets. }
 \end{proposition}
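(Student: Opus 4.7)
The plan is to recognize the matroid in question as the \emph{vector matroid} naturally induced by the $K\times N$ sensing matrix $\bB$, and then to verify the matroid axioms directly from Assumption 1. With ground set $\mathcal{N}=\{1,\ldots,N\}$ indexing the columns of $\bB$, I would define
\[
\mathcal{I}\triangleq\{\mathcal{A}\subseteq\mathcal{N}:\text{the columns }\{\bB_i\}_{i\in\mathcal{A}}\text{ are linearly independent in }\mathbb{R}^K\}.
\]
Assumption 1 asserts that every choice of $K$ columns of $\bB$ is linearly independent; since $\bB$ has only $K$ rows, this forces $\mathcal{I}=\{\mathcal{A}\subseteq\mathcal{N}:|\mathcal{A}|\leq K\}$, so the candidate matroid is precisely the uniform matroid $U_{K,N}$ on $\mathcal{N}$.

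The axiom check then reduces to three one-line observations. First, $\emptyset\in\mathcal{I}$ is immediate. Second (hereditariness), if $\mathcal{A}\in\mathcal{I}$ and $\mathcal{B}\subseteq\mathcal{A}$, any sub-collection of a linearly independent set is linearly independent, so $\mathcal{B}\in\mathcal{I}$. Third (augmentation/exchange), for $\mathcal{A},\mathcal{B}\in\mathcal{I}$ with $|\mathcal{A}|<|\mathcal{B}|\leq K$, the set $\mathcal{B}\setminus\mathcal{A}$ is nonempty, and for any $x\in\mathcal{B}\setminus\mathcal{A}$ we have $|\mathcal{A}\cup\{x\}|\leq K$, so Assumption 1 directly gives $\mathcal{A}\cup\{x\}\in\mathcal{I}$. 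This also identifies the bases of the matroid as the full-rank sub-matrices $\bB_{\mathcal{A}}$ with $|\mathcal{A}|=K$, and its rank function as $r(\mathcal{A})=\min(|\mathcal{A}|,K)$.

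The one subtlety requiring care is the apparent mismatch between the admissible actions $\mathcal{A}_t$, allowed by the problem to have cardinality anywhere in $\{K,\ldots,N\}$, and the matroid independent sets, whose cardinality is at most $K$. I would make this precise by pointing out that the matroid $\mathcal{M}(\mathcal{N},2^{\mathcal{N}})$ encodes the \emph{identifiability} geometry of the CCS sensing dictionary rather than a hard cardinality constraint on $\mathcal{A}_t$: the action $\mathcal{A}_t$ selects a sub-dictionary $\bB_{\mathcal{A}_t}$ whose \emph{every} $K$-element column sub-family is a basis, which is exactly the statement that the restriction of the linear matroid to $\mathcal{A}_t$ is again uniform. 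I do not anticipate a serious technical obstacle; once the bijection between actions and column subsets is made explicit, Assumption 1 supplies the general-position hypothesis that all three matroid axioms need, and the remaining content of the proposition is the structural observation that opens the door to matroid-optimization tools for the policy analysis that follows.
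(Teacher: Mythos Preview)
The paper states Proposition 1 without proof. Read literally---ground set $\mathcal{N}$, independent sets $2^{\mathcal{N}}$---the claimed matroid is the \emph{free matroid} on $\mathcal{N}$, whose axioms hold trivially (every subset is declared independent, so hereditariness and exchange are automatic). The paper presumably regards this as evident and omits any argument.

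Your proof establishes something different: that the \emph{vector matroid} of $\bB$ is, under Assumption~1, the uniform matroid $U_{K,N}$ with independent sets $\{\mathcal{A}:|\mathcal{A}|\le K\}$. That argument is correct and is in fact the matroid implicitly used later when Edmonds' greedy result is invoked for $|\mathcal{A}_t|=K$. But it is not the matroid asserted in the proposition: $U_{K,N}\neq(\mathcal{N},2^{\mathcal{N}})$ unless $K=N$. Your final paragraph rightly flags a tension between admissible actions (cardinality $\ge K$) and matroid independence; the same tension applies to your own $U_{K,N}$, whose independent sets have cardinality $\le K$. Indeed, the set $\{\mathcal{A}:K\le|\mathcal{A}|\le N\}$ of admissible actions is not downward-closed and hence cannot be the independent-set family of \emph{any} matroid. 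So: to match the proposition as written, a one-line check of the free-matroid axioms suffices; your more substantive $U_{K,N}$ analysis is correct but is better positioned as a remark motivating the greedy arguments that follow, rather than as a proof of the stated proposition.
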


In our notation, the $|\A_t|\times 1$ vector $\boldsymbol{\alpha}_{\mathcal{A}_t}$ is the potentially sparse vector whose entries are $\alpha_i[t]$ for indices in the set $\{i \in \A_t: s_i[t]=1\}$.  We define $\bs_{\A_t}$ as the support of the vector $\boldsymbol{\alpha}_{\A_t}$, where $\bs_{\A_t}$ includes the entries in $\bs[t]$ corresponding to the indices in the set $\A_t$ and assume that the reward 
 that the CR seeks is a function of $\bs[t]$ only. 
 The CR task is to recover the vector $\boldsymbol{\alpha}_{\A_t}$ and its support $\bs_{\A_t}$ (observable system state) based on the observation vector $\boldsymbol{\theta}_{\mathcal{A}_t}$. With the definitions given above, the noiseless observation model is expressed as (see Fig. \ref{fig.1})
\begin{equation}
\label{obs_model}
\boldsymbol{\theta}_{\mathcal{A}_t} = \bB_{\A_t}\boldsymbol{\alpha}_{\A_t}\,.
\end{equation} 

The $\ell_1$ norm of $\|\bs_{\A_t}\|_1$ is the number of non-zero entries of $\boldsymbol{\alpha}_{\A_t}$ ($\|\boldsymbol{\alpha}_{\A_t}\|_0$).  If $|\A_t|=K$, then any full rank sensing matrix would lead to the exact recovery of $\boldsymbol{\alpha}_{\A_t}$. For $|\A_t|>K$,
 a well established fact in compressive sensing \cite{CS_theory} is that a necessary and sufficient condition to recover uniquely $\boldsymbol{\alpha}_{\A_t}$ is that  $\|\bs_{\A_t}\|_1 < K/2$ and that any $K$ columns of the sub-matrix $\bB_{\A_t}$ are linearly independent. In the absence of noise, the choice of the specific coefficients of $\bB_{\A_t}$ can be purely based on widening the observability for $\boldsymbol{\alpha}_{\A_t}$, but otherwise it can be completely arbitrary, thanks to Assumption $1$.
  
In this work, we do not delve into the details of the sparse recovery algorithm \cite{CS_theory}, but consider an idealized version of the data processing that conforms to the following characteristics:
\begin{assumption}{\it
As long as the number of non-zero entries in $\bs_{\A_t}$ is smaller than  $ K/2$ ($\|\bs_{\A_t}\|_1 < K/2$), our sparse recovery algorithm is able to uniquely recover the support vector $\bs_{\A_t}$ from the observation $\boldsymbol{\theta}_{\A_t}$. If $\|\bs_{\A_t}\|_1 \geq K/2$, the CR experiences an erasure, collects no reward and no information from the action $\A_t$, as if it were the empty set. }
\end{assumption}  

Assumption $2$ is able to capture the well known phase transition in the behavior of sparse recovery algorithms, sharpening the effects of the transition.

\section{Stochastic optimization framework}\label{sec:StocasticOPT}

The system state in slot $t$ is not observable due to the constraint that $K\leq N$. Hence the stochastic optimization is an instance of the general model of Partially Observable Markov Decision Processes (POMDPs) \cite{Zhao}. For a POMDP, a sufficient statistic for the optimal decisions is the conditional probability that each state entry is in state $0$ (idle) given all past decisions and observations \cite{Zhao}. We denote the vector whose $i$-th entry is the conditional probability that $s_i[t] = 0$ by $\boldsymbol{\Omega}[t]\triangleq [\omega_1[t],\ldots,\omega_N[t]]$ and refer to it as the {\it belief vector}. The belief vector $\boldsymbol{\Omega}[t+1]$ can be updated recursively, given the action selected  $\A_t$ and the observation $\boldsymbol{\theta}_{\A_t}$ in slot $t$.

Compared to the prior art on spectrum sensing, the action $\A_t$ is no longer limited to have a cardinality $|\A_t| = K$, but is any $\A_t \in 2^{\mathcal N}$, with $K \leq |\A_t| \leq N$. This means that we 
have significantly expanded the action space and the observation space. Another complication arises when the solution is not unique: the belief vector update $\boldsymbol{\Omega}[t+1]\triangleq \mathcal{T}(\boldsymbol{\Omega}[t]|\A_t,\boldsymbol{\theta}_{\mathcal{A}_t}[t])$ is not straightforward, if it is based on the actual information provided by the  model \eqref{obs_model}. This is where Assumption $2$ helps: 
given the sensing action $\A_t$ and the observations $\boldsymbol{\theta}_{\mathcal{A}_t}$ in slot $t$, in our work, we obtain the belief vector for slot $t+1$ based on the value of $\Vert\bs_{\A_t}\Vert_1$ and whether there is an erasure or not. Thanks to this drastic simplification, the belief update depends only on $\bs_{\A_t}$ rather than $\boldsymbol{\theta}_{\mathcal{A}_t}[t]$, i.e.: 
\begin{equation}\label{eq.beliefupd}
\boldsymbol{\Omega}[t+1]= 
\mathcal{T}(\boldsymbol{\Omega}[t]|\A_t,\bs_{\A_t})\;,
\end{equation}
where \eqref{eq.beliefupd} is instrumental to make our problem tractable.  
More specifically, to tackle the behavior of the CR output in general, we define the following integer threshold function:
\begin{equation}
\Gamma_{\A_t}=\left \{ \begin{array}{ll} 
|\A_t|, & |\A_t|\leq K\\
 \lceil K/2\rceil - 1, & K<|\A_t|\leq N 
\end{array}\right..
\end{equation}
Under Assumption $2$, the belief update 
$\boldsymbol{\Omega}[t+1] = \mathcal{T}(\boldsymbol{\Omega}[t] | \A_t, \bs_{\A_t})$ in terms of the {\it recovered support vector} $\bs_{\A_t}$ and $\Gamma_{\A_t}$ leads to the following expression for each $\omega_i[t+1]$ \begin{align}
\omega_i[t+1] = \left \{ \begin{array}{ll} 
p_{10}, & i \in \A_t,\: \|\bs_{\A_t}\|_1 \leq \Gamma_{\A_t},\: s_i(t) = 1\\
p_{00}, & i \in \A_t,\: \|\bs_{\A_t}\|_1 \leq \Gamma_{\A_t},\: s_i(t) = 0\\
\tau(\omega_i[t]), & i \in \A_t,\: \|\bs_{\A_t}\|_1 > \Gamma_{\A_t}\\
\tau(\omega_i[t]). & i \notin \A_t
\end{array}\right. ,
\end{align}
where $\tau(\omega) \triangleq \omega p_{00} + (1 - \omega)p_{10}$. Note that when $|\A_t|=K$, the condition $\|\bs_{\A_t}\|_1 \leq \Gamma_{\A_t}$ is always met. 

Next we define two set functions that subsume all the important metrics guiding the CR to operations. One is the probability of the erasure event postulated in Assumption $2$, and thus is $f: {\mathcal N}\mapsto [0,1]$, the second is the reward function and is $f: {\mathcal N}\mapsto \mathbb{R}^+$. One useful definition which we use frequently in the following is as follows: 
given a set function, $f: {\mathcal N}\mapsto \mathbb{R}$ we define its marginal increment as
\begin{equation}\label{eq.marginal}
\frac{\partial f(\A)}{\partial a}\triangleq f(\A+a)-f(\A).
\end{equation}

\subsection{Erasure Event and its Probability}
In our problem, the CR decisions are entirely guided by its beliefs on the random variable $\|\bs_{\A_t}\|_1$. One of the aspects that the CR has to mind in its choice is the possibility of an erasure. 
The erasure event for action $\A_t$ is described as
\begin{equation}
{\mathcal E}_{\A_t}=(\|\bs_{\A_t}\|_1 > \Gamma_{\A_t}).
\end{equation}
The PMF of $\Vert\bs_{\A_t}\Vert_1$ is denoted by $P_{\|\bs_{\A_t}\|_1 }( k )$ and its CDF and complementary CDF are represented by $F_{\|\bs_{\A_t}\|_1 }(x)$ and $F^c_{\|\bs_{\A_t}\|_1 }(x)$, respectively. Using the law of total probability for conditional probabilities, we can obtain the PMF of the random variable $\|\bs_{\A_t+a}\|_1$ recursively:
\be 
\label{eq.PMF_recursion}
P_{\|\bs_{\A_t +a}\|_1 }( k ) = \left \{ \begin{array}{ll}
\omega_a[t]P_{\|\bs_{\A_t}\|_1 }( k )\,, & k=0\\
(1-\omega_a[t])P_{\|\bs_{\A_t}\|_1 }( k -1) \\ + \omega_a[t]P_{\|\bs_{\A_t}\|_1 }( k )\,. & 1\leq k \leq \Gamma_{\A_t}
\end{array}\right. .
\ee
Based on Assumption $2$, the probability of erasure given the action $\A_t$ can be written as follows
\begin{align}
\rho_{\A_t}&\triangleq\text{Pr}({\mathcal E}_{\A_t}|\A_t)=F^{c}_{\|\bs_{\A_t}\|_1 }(\Gamma_{\A_t}) \nonumber \\
&=\left \{ \begin{array}{ll} 
0, & |\A_t|\leq K\\
\text{Pr}(\|\bs_{\A_t}\|_1 > \Gamma_{\A_t}), & K<|\A_t|\leq N
\end{array}\right. .
\end{align}
The expression for $\rho_{\A_t}$ when $|\A_t|>K$ can be equivalently expressed in terms of the PMF of the random variable $\|\bs_{\A_t}\|_1$ as follows
\be
\label{eq.rho_pmf}
\rho_{\A_t} = 1 - \sum_{k = 0}^{\Gamma_{\A_t}} P_{\|\bs_{\A_t}\|_1 }( k ).
\ee
Note that the function $\rho_{\A_t}$ is a set function $f: \{0,1\}^N\mapsto [0,1]$.
Two simple observations follow from the description of the probability of erasure. First, considering the fact that $\text{Pr}(\|\bs_{\A_t}\|_1 > \Gamma_{\A_t})=
\text{Pr}(\|\bs_{\A_t-i}\|_1+s_i[t] > \Gamma_{\A_t})$, the function $\rho_{\A_t}$ obeys the following formula for all $i\in \A_t$:
\begin{equation}\label{eq.rhoCDF}
\rho_{\A_t}=\left \{ \begin{array}{ll} 
0, & |\A_t|\leq K\\
\omega_i[t]F^{c}_{\|\bs_{\A_t-i}\|_1 }(\Gamma_{\A_t}) \\+(1-\omega_i[t])F^{c}_{\|\bs_{\A_t-i}\|_1 }(\Gamma_{\A_t}-1) & |\A_t|=K+1\\
(1-\omega_i[t])F^{c}_{\|\bs_{\A_t-i}\|_1 }(\Gamma_{\A_t}-1)\\+\omega_i[t]\rho_{\A_t-i}~, & K+1<|\A_t|\leq N
\end{array}\right. 
\end{equation}
Using the expressions in \eqref{eq.PMF_recursion} and \eqref{eq.rho_pmf}, we can easily derive the marginal increment of $\rho_{\A_t}$ as
\begin{align}
\label{eq.rho_marginal}
\frac{\partial \rho_{\A_t} } {\partial a}&\triangleq \rho_{\A_t+a} - \rho_{\A_t} \nonumber \\
&= \left \{ \begin{array}{ll} 
 \rho_{\A_t+a} & |\A_t|=K\\
 (1-\omega_a[t])P_{\|\bs_{\A_t}\|_1}(\Gamma_{\A_t}) & K<|\A_t|\leq N-1
 \end{array}\right. .
\end{align}
Also, one can prove the following Lemmas which present the critical properties of the erasure probability.
\begin{lemma}\label{lemma.erasure_risk}{\it  The set function $\rho_{\A}$ is monotone non-decreasing with respect to $\A$, i.e., if ${\mathcal A} \subset {\mathcal B}$ then $\rho_{\A}\leq \rho_{\mathcal B}$. Furthermore, given a set $\A_t$, if $\omega_i[t]>\omega_j[t]$ then the risk of erasure is such that
\be\label{eq.rho-order}
\rho_{\A_t+i}\geq \rho_{\A_t+j}.
\ee
}
\end{lemma}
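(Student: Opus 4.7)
My plan is to deduce both assertions from the marginal-increment identities \eqref{eq.rho_marginal} and \eqref{eq.rhoCDF}, which already isolate the dependence of $\rho_{\A_t+a}$ on $\omega_a[t]$ in a form amenable to direct manipulation.

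For the monotonicity claim, I would proceed by a one-element-at-a-time induction. Showing $\rho_{\A_t+a}\geq \rho_{\A_t}$ for every $a\notin \A_t$ is enough, since the general statement $\A\subset\cB\Rightarrow \rho_{\A}\leq \rho_{\cB}$ then follows by enlarging $\A$ to $\cB$ one element at a time. The single-step inequality is read off directly from \eqref{eq.rho_marginal}: for $|\A_t|<K$ both sides vanish; for $|\A_t|=K$ the marginal increment equals $\rho_{\A_t+a}\geq 0$ trivially; and for $K<|\A_t|\leq N-1$ the marginal is $(1-\omega_a[t])\,P_{\|\bs_{\A_t}\|_1}(\Gamma_{\A_t})$, a product of non-negative terms.

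For the second claim, the key observation is that $|\A_t+i|=|\A_t+j|$, so $\Gamma_{\A_t+i}=\Gamma_{\A_t+j}=:\Gamma$ and we are comparing two erasure probabilities evaluated at the same threshold. I would apply \eqref{eq.rhoCDF} to each of $\rho_{\A_t+i}$ and $\rho_{\A_t+j}$, exploiting the fact that both the PMF $P_{\|\bs_{\A_t}\|_1}(\cdot)$ and the complementary CDF $F^{c}_{\|\bs_{\A_t}\|_1}(\cdot)$ are independent of $\omega_i,\omega_j$ (since $i,j\notin \A_t$ and the coordinates of $\bs[t]$ are independent). Subtracting, the purely $\A_t$-dependent terms cancel and what remains is a single expression proportional to $(\omega_i-\omega_j)\,P_{\|\bs_{\A_t}\|_1}(\Gamma)$, whose sign is dictated by the ordering of $\omega_i$ and $\omega_j$ and yields \eqref{eq.rho-order}.

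The main obstacle I expect is careful bookkeeping across the boundary $|\A_t|+1=K+1$ versus $|\A_t|+1>K+1$, since \eqref{eq.rhoCDF} takes different functional forms in these regimes; the cancellation leading to a single $(\omega_i-\omega_j)$-multiple must be verified separately in each sub-case. A secondary but non-trivial check is that $P_{\|\bs_{\A_t}\|_1}(\Gamma)$ is indeed the common multiplicative factor after the cancellation; this relies on the identity $F^{c}_{\|\bs_{\A_t}\|_1}(\Gamma-1)=F^{c}_{\|\bs_{\A_t}\|_1}(\Gamma)+P_{\|\bs_{\A_t}\|_1}(\Gamma)$ to bring both representations of $\rho_{\A_t+i}$ and $\rho_{\A_t+j}$ into the same form before taking the difference.
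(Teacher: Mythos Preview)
Your plan coincides with the paper's proof: monotonicity is obtained by summing the non-negative marginal increments in \eqref{eq.rho_marginal}, and the ordering claim is obtained by subtracting the two expressions for $\rho_{\A_t+i}$ and $\rho_{\A_t+j}$. The paper actually uses \eqref{eq.rho_marginal} directly for the second part (writing $\rho_{\A_t+a}=\rho_{\A_t}+(1-\omega_a[t])P_{\|\bs_{\A_t}\|_1}(\Gamma_{\A_t})$ and subtracting), which is a one-line shortcut compared with your route through \eqref{eq.rhoCDF} followed by the identity $F^{c}(\Gamma-1)-F^{c}(\Gamma)=P(\Gamma)$; the two computations are equivalent and your case split at $|\A_t|+1=K+1$ collapses to the same formula.

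There is, however, a sign slip you should correct before writing up. Carrying out the subtraction gives
\[
\rho_{\A_t+i}-\rho_{\A_t+j}
=\big[(1-\omega_i[t])-(1-\omega_j[t])\big]\,P_{\|\bs_{\A_t}\|_1}(\Gamma_{\A_t})
=(\omega_j[t]-\omega_i[t])\,P_{\|\bs_{\A_t}\|_1}(\Gamma_{\A_t}),
\]
so the difference is proportional to $\omega_j-\omega_i$, not $\omega_i-\omega_j$ as you state. Under the hypothesis $\omega_i[t]>\omega_j[t]$ this yields $\rho_{\A_t+i}\leq\rho_{\A_t+j}$, the reverse of \eqref{eq.rho-order}. (Your route via \eqref{eq.rhoCDF} produces the same reversal once the CDF identity is applied.) The paper's printed subtraction carries the same sign error; the inequality that actually follows from the argument matches the intuition that adjoining the entry more likely to be idle (larger $\omega$) incurs the smaller erasure risk.
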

\begin{proof}
The first statement is a mere result of the axioms of probability. In addition, we can directly show it by using \eqref{eq.rho_marginal} successively and add the marginal increments to $\rho_{\A}$ due to the elements in $\mathcal{B}\setminus \A$. The second statement is trivially satisfied for $|\A_t|=K$ and for $|\A_t|\geq K+1$, using \eqref{eq.rho_marginal} we can write 
\begin{align}
\rho_{\A_t+i}&=\rho_{\A_t}+\frac{\partial \rho_{\A_t} } {\partial i}\nonumber\\
		    &=\rho_{\A_t}+ (1-\omega_i[t])P_{\|\bs_{\A_t}\|_1}(\Gamma_{\A_t}).    
\end{align}
As a result, we have
\be
\rho_{\A_t+i}-\rho_{\A_t+j}=(\omega_i[t]-\omega_j[t])	P_{\|\bs_{\A_t}\|_1}(\Gamma_{\A_t})\geq 0,
\ee
where the last inequality follows from $\omega_i[t]>\omega_j[t]$.
\end{proof}
\begin{lemma}\label{lemma.rho-supmod}
{\it If $\boldsymbol{\Omega}[t]$ is such that $P_{\|\bs_{\A_t}\|_1}(\Gamma_{\A_t})$ is a monotonic non-decreasing set function with respect to $\A_t$, then the set function $\rho_{\A_t}$ is super-modular. That is, if  $\A_t\subset {\mathcal B}_t$, $a\notin \A_t$ and $a\notin {\mathcal B}_t$:
\begin{equation}
\frac{\partial \rho_{{\mathcal B}_t}}{\partial a}\geq \frac{\partial \rho_{\A_t}}{\partial a}.
\end{equation}}
\end{lemma}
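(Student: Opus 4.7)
My approach is to read the super-modularity inequality directly from the marginal-increment formula \eqref{eq.rho_marginal}, letting the hypothesis on $P_{\|\bs_{\A_t}\|_1}(\Gamma_{\A_t})$ do the heavy lifting. The key structural observation I would lean on is that $\Gamma_{\A}$ stabilizes to the constant $\Gamma^*\triangleq\lceil K/2\rceil-1$ as soon as $|\A|>K$, so in the ``compressive'' regime the hypothesis compares two values of $P_{\|\bs_{\cdot}\|_1}$ at a common threshold $\Gamma^*$, rather than at two different thresholds.

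In the main case $K<|\A_t|<|\mathcal{B}_t|\leq N-1$, formula \eqref{eq.rho_marginal} reduces super-modularity to a one-line comparison:
\[
\frac{\partial \rho_{\mathcal{B}_t}}{\partial a} - \frac{\partial \rho_{\A_t}}{\partial a} \;=\; (1-\omega_a[t])\bigl[P_{\|\bs_{\mathcal{B}_t}\|_1}(\Gamma^*) - P_{\|\bs_{\A_t}\|_1}(\Gamma^*)\bigr] \;\geq\; 0,
\]
where non-negativity follows because $1-\omega_a[t]\in[0,1]$ and because $\A_t\subset\mathcal{B}_t$ triggers the monotonicity hypothesis applied at the common cutoff $\Gamma^*$.

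For the boundary regime $|\A_t|\leq K$, I would dispose of the easy sub-case first: if $|\A_t+a|\leq K$ then both $\rho_{\A_t}$ and $\rho_{\A_t+a}$ vanish, so $\partial\rho_{\A_t}/\partial a=0$, while Lemma \ref{lemma.erasure_risk} keeps $\partial\rho_{\mathcal{B}_t}/\partial a\geq 0$ and the inequality is trivial. The remaining sub-case, $|\A_t|=K$ (hence $|\A_t+a|=K+1$ and $\partial\rho_{\A_t}/\partial a=\rho_{\A_t+a}$), I would handle by expanding $\rho_{\A_t+a}$ through the recursion \eqref{eq.rhoCDF} with $i=a$ and comparing it term-by-term against the product-form $\partial\rho_{\mathcal{B}_t}/\partial a$ supplied by \eqref{eq.rho_marginal}, using the hypothesis applied to the chain $\A_t\subset\A_t+a\subset\mathcal{B}_t+a$ together with $\mathcal{B}_t\supsetneq\A_t$ of size $\geq K+1$.

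The main obstacle I expect sits precisely at this $|\A_t|=K$ boundary, because the two branches of \eqref{eq.rho_marginal} meet there and the threshold $\Gamma_{\A}$ itself jumps from $K$ down to $\Gamma^*$; the hypothesis must be read across this discontinuity, comparing $\prod_{i\in\A_t}(1-\omega_i[t])$ on the $\A_t$ side against $P_{\|\bs_{\mathcal{B}_t}\|_1}(\Gamma^*)$ on the $\mathcal{B}_t$ side, and shown strong enough to absorb the full complementary-CDF tail $F^c_{\|\bs_{\A_t}\|_1}(\Gamma^*)$ that surfaces in $\rho_{\A_t+a}$. Away from this boundary, as noted, the argument collapses to the one-line comparison above.
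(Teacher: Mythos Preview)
Your approach is essentially the paper's. For $|\A_t|>K$ the paper does exactly what you do: it subtracts the two instances of \eqref{eq.rho_marginal} to get
\[
\frac{\partial \rho_{\mathcal{B}_t}}{\partial a}-\frac{\partial \rho_{\A_t}}{\partial a}
=(1-\omega_a[t])\bigl(P_{\|\bs_{\mathcal{B}_t}\|_1}(\Gamma_{\A_t})-P_{\|\bs_{\A_t}\|_1}(\Gamma_{\A_t})\bigr),
\]
and invokes the monotonicity hypothesis. That is the whole argument in the paper for this regime, and it coincides with your one-line comparison.

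Where you differ is at the boundary $|\A_t|=K$. The paper disposes of it in a single sentence, saying the case is ``simple'' because of the discontinuous jump of $\rho$ from zero to a positive value; it does not actually carry out the comparison between $\rho_{\A_t+a}$ and $(1-\omega_a[t])P_{\|\bs_{\mathcal{B}_t}\|_1}(\Gamma^*)$. Your instinct that this is precisely the delicate spot is correct: the jump makes $\partial\rho_{\A_t}/\partial a$ \emph{large}, which is the wrong direction for super-modularity, so the paper's one-liner does not settle it. Your plan to expand $\rho_{\A_t+a}$ through \eqref{eq.rhoCDF} and lean on the hypothesis across the chain $\A_t\subset\A_t+a\subset\mathcal{B}_t$ is more honest about where the work lies, though you should be aware that the paper itself does not supply those details either.
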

\begin{proof}
The case where $|\A_t|=K$ is simple because, for any non-trivial state $\boldsymbol{\Omega}[t]$, there is a discontinuity and the erasure probability jumps from zero to a positive value just by adding any extra element. For  $|\A_t|>K$,
if $P_{\|\bs_{\A_t}\|_1}(\Gamma_{\A_t})$  is monotonic then for all $\A_t\subset {\mathcal B}_t$ we have $P_{\|\bs_{{\mathcal B}_t}\|_1}(\Gamma_{\A_t})\geq P_{\|\bs_{\A_t}\|_1}(\Gamma_{\A_t})$.
Using \eqref{eq.rho_marginal}, we obtain
\begin{align*}
\frac{\partial \rho_{{\mathcal B}_t}}{\partial a}-\frac{\partial \rho_{\A_t}}{\partial a}
= (1-\omega_a[t])\left(P_{\|\bs_{{\mathcal B}_t}\|_1}(\Gamma_{\A_t})-P_{\|\bs_{\A_t}\|_1}(\Gamma_{\A_t})\right),
\end{align*}
which is positive under the assumption that $P_{\|\bs_{\A_t}\|_1}(\Gamma_{\A_t})$  is monotonic non-decreasing.
\end{proof}

Lemma \ref{lemma.erasure_risk} captures mathematically the intuitive fact that the CR incurs in increasing risk as it selects bigger sets. The chances of reward are tempered by this fact, as discussed in the next Section. 
Lemma \ref{lemma.rho-supmod} is much stronger, showing that the risk grows faster than linearly and that is true if bigger sets raise the probability $P_{\|\bs_{\A_t}\|_1}(\Gamma_{\A_t})$ of the support value that is at the boundary between success and erasure. However, noting that $0\leq \rho_{\A}\leq 1$, as the erasure probability grows, its increments have to taper off to zero eventually. The condition of  Lemma \ref{lemma.rho-supmod} amounts to assuming that 
the inflection does not happen for any of the subsets in $2^{\mathcal N}$.

\subsection{Reward Function}
The Cognitive CS optimization over a finite horizon $T$ is a decision process driven by the objective of maximizing the average reward over a horizon of $T$ slots, by choosing strategically a sensing policy that governs the selection of the observation in each slot.  A policy is a function that sifts through all possible options represented by $T$-uples of subsets in $(2^{\mathcal N})^T$ each having cardinality greater or equal to $K$. More rigorously, a sensing policy $\pi$ is a sequence of functions $\pi \triangleq [\pi_1, \ldots, \pi_T ]$, where $\pi_t$ is the decision rule at time $t$ that maps a belief vector $\boldsymbol{\Omega}[t]$ onto a sensing action $\A_t \subseteq \mathcal{N}$, i.e. $\mathcal{A}_t = \pi_t(\boldsymbol{\Omega}[t])$. The optimal policy accrues the maximum total expected reward over a finite horizon, i.e.
\begin{equation}
\label{optimal_policy}
\pi^* = \argmax_{\pi}~~\expect_{\pi} \left[\sum_{t=1}^T R_{\pi_t}\big(\boldsymbol{\Omega}[t]\big){\Bigg\vert} \boldsymbol{\Omega}[1] \right],
\end{equation}
where $R_{\pi_t}(\boldsymbol{\Omega}[t])$ is the reward obtained at time $t$ corresponding to the belief vector $\boldsymbol{\Omega}[t]$ using the policy $\pi_t$. For any given $\mathcal{A}_t =\pi_t(\boldsymbol{\Omega}[t])$, 
the reward is indicated as $R_{\A_t}(\boldsymbol{\Omega}[t])$. 

For a given sensing policy $\pi$, the belief vectors $\{\boldsymbol{\Omega}[t]\}_{t=1}^{T}$ form a Markov process with an uncountable state space. The expectation in \eqref{optimal_policy} is with respect to this Markov process which determines the reward process. The vector $\boldsymbol{\Omega}[1]$ is the initial belief vector and if no information about the initial system state is available, each entry of $\boldsymbol{\Omega}[1]$ can be set to the stationary distribution $\omega_o$ of the underlying Markov chain:
\be 
\omega_o = \dfrac{p_{10}}{p_{01}+p_{10}}.
\ee 

In the following, we introduce two possible formulations for the reward functions. 
\subsubsection{The CCS idle channels collector}
For spectrum sensing applications, the idle channels bring reward, since they can be used to communicate among secondary users, without interfering with a primary user. In the most basic instance of this, with
equal bandwidth for subbands, when a channel is detected to be idle, the CR can collect one unit of reward. If none of the channels sensed is in the idle state or if there is an erasure, the CR collects no reward, and waits until the next time slot to make another choice. Mathematically, the reward of taking action $\A_t$ is expressed as
\begin{align} 
\label{eq:rewardCCS-collector}
R_{\A_t}[t] &= \sum_{i\in \A_t} (1-s_i[t]).
\end{align}

When $|\A_t| = K$, Assumption $2$ states that $\bB_{\A_t}$ is full rank and, therefore, the CR can uniquely recover $\boldsymbol{\alpha}_{\A_t}$ and its support vector $\bs_{\A_t}$. This is equivalent to the MAB problem with $K$-arms posed in \cite{Ahmad2}. But the structure of the reward is different for $|\A_t|>K$, due to the fact that the reward is collected only if the support is smaller than $K/2$. 
The following Lemma describes an expression for the expected immediate reward.
\begin{lemma}\label{lemma:expected_reward1}{\it Under Assumptions $1$ and $2$, the expected reward of taking action $\A_t$ for the CR is}
\begin{align} \label{eq:expected_reward}
\expect\big[R_{\A_t}[t]\big]&= \sum_{i\in \A_t}\omega_i[t]F_{\|\bs_{\A_t-i}\|_1}(\Gamma_{\A_t}).
\end{align}
\end{lemma}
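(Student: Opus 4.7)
The plan is to compute the expectation by first making explicit the fact that, by Assumption~2, the reward is nullified on the erasure event. Concretely, the reward in \eqref{eq:rewardCCS-collector} should be interpreted as
\[
R_{\A_t}[t]=\mathbb{1}\{\|\bs_{\A_t}\|_1\leq\Gamma_{\A_t}\}\sum_{i\in\A_t}(1-s_i[t]),
\]
because no reward is collected when $\|\bs_{\A_t}\|_1>\Gamma_{\A_t}$ and because when $|\A_t|\leq K$ we have $\Gamma_{\A_t}=|\A_t|$, so that the indicator is automatically one and no correction is needed. Once this is in place, linearity of expectation lets me work term by term on the sum over $i\in\A_t$.

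Next I focus on a single summand $\expect\bigl[(1-s_i[t])\mathbb{1}\{\|\bs_{\A_t}\|_1\leq\Gamma_{\A_t}\}\bigr]$. The decisive observation is that $(1-s_i[t])$ vanishes unless $s_i[t]=0$, so I can replace the factor $(1-s_i[t])$ by the indicator $\mathbb{1}\{s_i[t]=0\}$ and restrict the expectation to that event. On $\{s_i[t]=0\}$ the summand $s_i[t]$ contributes nothing to $\|\bs_{\A_t}\|_1$, so $\|\bs_{\A_t}\|_1=\|\bs_{\A_t-i}\|_1$; therefore the remaining indicator becomes $\mathbb{1}\{\|\bs_{\A_t-i}\|_1\leq\Gamma_{\A_t}\}$.

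Now I invoke the i.i.d.\ assumption on the component Markov chains imposed just after~\eqref{eq.MC}: conditioned on the past, the $s_j[t]$'s are mutually independent, so the event $\{s_i[t]=0\}$ is independent of $\|\bs_{\A_t-i}\|_1$. Hence the single-term expectation factorizes as $\text{Pr}(s_i[t]=0)\,\text{Pr}(\|\bs_{\A_t-i}\|_1\leq\Gamma_{\A_t})=\omega_i[t]F_{\|\bs_{\A_t-i}\|_1}(\Gamma_{\A_t})$. Summing over $i\in\A_t$ yields the claimed expression \eqref{eq:expected_reward}.

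The only step that requires any care is the independence argument: one must be certain that conditioning on the current state $s_i[t]$ does not perturb the distribution of the remaining entries, which is granted here by the per-channel independence baked into the model. A secondary sanity check is to verify the formula on the boundary case $|\A_t|=K$, where $\Gamma_{\A_t}=K=|\A_t|$ forces $F_{\|\bs_{\A_t-i}\|_1}(\Gamma_{\A_t})=1$ and the expression collapses to $\sum_{i\in\A_t}\omega_i[t]$, matching the standard CSS reward of \cite{Ahmad2}; this confirms consistency with the prior art and serves as a quick check that no boundary term has been mishandled.
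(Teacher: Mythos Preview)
Your proof is correct and follows essentially the same route as the paper: rewrite the reward with the no-erasure indicator, reduce each summand to $\text{Pr}(\{s_i[t]=0\}\cap\{\|\bs_{\A_t-i}\|_1\leq\Gamma_{\A_t}\})$ by noting that $s_i[t]=0$ removes $i$'s contribution to the support, and then factor via the per-channel independence. The paper's argument is slightly terser (it does not spell out the independence step or the $|\A_t|=K$ sanity check), but the logic is identical.
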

\begin{proof}
The expected immediate reward of the CSS idle channels collector can be written as
\begin{align} 
\label{eq:expected_rewardproof}
\expect\big[R_{\A_t}[t]\big] &= \sum_{i\in \A_t} \text{Pr}((s_i[t]=0)\cap {\mathcal E}_{\A_t}^c) \nonumber \\
					      &=\sum_{i\in \A_t} \text{Pr}((s_i[t]=0)\cap (\|\bs_{\A_t-i}\|_1+s_i[t]\leq \Gamma_{\A_t})) \nonumber \\
					      &=\sum_{i\in \A_t} \text{Pr}(s_i[t]=0)\text{Pr}(\|\bs_{\A_t-i}\|_1 \leq \Gamma_{\A_t}),
\end{align}
where ${\mathcal E}_{\A_t}^c$ denotes the complement of the erasure event given the action $\A_t$. 
\end{proof}

The following Lemmas provide a few key characteristics of the expected immediate reward that are used in our analysis in Section \ref{sec:optimal_policy}. 
\begin{lemma} \label{lemma.reward-submod}
{\it  For $|\A_t|>K$, the marginal rewards associated with \eqref{eq:rewardCCS-collector} is:
\be\label{eq.marginalreward}
\frac{\partial \expect\big[R_{\A_t}[t]\big]}{\partial a}
=\omega_a[t](1-\rho_{\A_t})-\sum_{i\in \A_t}\omega_i[t]\frac{\partial F^c_{\|\bs_{\A_t-i}\|_1}(\Gamma_{\A_t})}{\partial a},
\ee
and for $|\A_t|=K$, the marginal reward is:
\begin{align}
\label{eq.marginalrewardAt=K}
\frac{\partial \expect\big[R_{\A_t}[t]\big]}{\partial a}&=\omega_a[t](1-F_{\|\bs_{\A_t}\|_1}^{c}(\lceil K/2\rceil-1))\nonumber \\
&-\sum_{i\in \A_t} \omega_i[t] F_{\|\bs_{\A_t+a-i}\|_1}^{c}(\lceil K/2\rceil-1).
\end{align}
Moreover, if $\boldsymbol{\Omega}[t]$ is such that $P_{\|\bs_{\A_t}\|_1}(\Gamma_{\A_t})$ is a monotonic non-decreasing set function with respect to $\A_t$, the expected reward is a sub-modular set function, i.e. for all $\A_t\subset {\mathcal B}_t$
$$\frac{\partial \expect\big[R_{\A_t}[t]\big]}{\partial a}\geq \frac{\partial \expect\big[R_{{\mathcal B}_t}[t]\big]}{\partial a}.$$}
\end{lemma}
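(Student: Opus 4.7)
The plan is to derive each formula directly from the closed form in Lemma \ref{lemma:expected_reward1}, then recast the marginal reward in terms of PMF values so that sub-modularity follows from the monotonicity hypothesis together with Lemma \ref{lemma.erasure_risk}. For \eqref{eq.marginalreward} with $|\A_t|>K$, I would split
$$
\expect\big[R_{\A_t+a}[t]\big]=\omega_a[t]F_{\|\bs_{\A_t}\|_1}(\Gamma)+\sum_{i\in\A_t}\omega_i[t]F_{\|\bs_{\A_t+a-i}\|_1}(\Gamma),
$$
with $\Gamma\triangleq\lceil K/2\rceil-1$ invariant in this regime, subtract $\expect[R_{\A_t}[t]]$ term by term, and apply $F=1-F^c$. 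The isolated $i=a$ summand collapses to $\omega_a[t](1-\rho_{\A_t})$, and each bracket inside the sum becomes $-\partial F^c_{\|\bs_{\A_t-i}\|_1}(\Gamma)/\partial a$, yielding the claimed identity.

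For \eqref{eq.marginalrewardAt=K}, the regime change at $|\A_t|=K\to K+1$ is the key subtlety: $\Gamma_{\A_t}=K$ while $\Gamma_{\A_t+a}=\lceil K/2\rceil-1$. The simplification comes from the fact that $|\A_t-i|=K-1$, forcing $F_{\|\bs_{\A_t-i}\|_1}(K)=1$ and hence $\expect[R_{\A_t}[t]]=\sum_{i\in\A_t}\omega_i[t]$. Writing $\expect[R_{\A_t+a}[t]]$ at the new threshold $\lceil K/2\rceil-1$ and subtracting produces \eqref{eq.marginalrewardAt=K} after rewriting each $F$ as $1-F^c$ and cancelling the $\omega_i[t]$ contributions.

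For sub-modularity, I would use the PMF recursion \eqref{eq.PMF_recursion} to obtain $\partial F^c_{\|\bs_{\A_t-i}\|_1}(\Gamma)/\partial a = (1-\omega_a[t])P_{\|\bs_{\A_t-i}\|_1}(\Gamma)$, which reduces \eqref{eq.marginalreward} to
$$
\frac{\partial\expect[R_{\A_t}[t]]}{\partial a}=\omega_a[t](1-\rho_{\A_t})-(1-\omega_a[t])\sum_{i\in\A_t}\omega_i[t]P_{\|\bs_{\A_t-i}\|_1}(\Gamma).
$$
For $\A_t\subset\mathcal{B}_t$ with $a\notin\mathcal{B}_t$ and both sets in the compressive regime, the first term is weakly larger for $\A_t$ by Lemma \ref{lemma.erasure_risk}, while the subtracted sum is weakly smaller for $\A_t$: applying the monotonicity hypothesis to $\A_t-i\subset\mathcal{B}_t-i$ gives $P_{\|\bs_{\A_t-i}\|_1}(\Gamma)\leq P_{\|\bs_{\mathcal{B}_t-i}\|_1}(\Gamma)$ for every $i\in\A_t$, and the sum over $\mathcal{B}_t$ has additional non-negative contributions from $i\in\mathcal{B}_t\setminus\A_t$. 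Both effects push the marginal increment down as the set grows, so the reward is sub-modular.

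The step I expect to be the main obstacle is the boundary case $|\A_t|=K<|\mathcal{B}_t|$, where the two marginal-reward expressions from the lemma must be compared across regimes; I would handle it separately by exploiting $\rho_{\A_t}=0$ and bounding term by term the expression in \eqref{eq.marginalrewardAt=K} against the $|\mathcal{B}_t|>K$ formula, again invoking the monotonicity assumption on $P_{\|\bs_{\cdot}\|_1}(\lceil K/2\rceil-1)$ to align the two regimes.
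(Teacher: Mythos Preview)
Your proposal is correct and follows essentially the same route as the paper: both derive \eqref{eq.marginalreward} and \eqref{eq.marginalrewardAt=K} directly from Lemma~\ref{lemma:expected_reward1} by splitting off the $i=a$ summand and passing to $F^c$, and both establish sub-modularity by rewriting $\partial F^c_{\|\bs_{\A_t-i}\|_1}(\Gamma)/\partial a=(1-\omega_a[t])P_{\|\bs_{\A_t-i}\|_1}(\Gamma)$ via \eqref{eq.PMF_recursion}, then invoking Lemma~\ref{lemma.erasure_risk} for the $\rho$-term and the monotonicity hypothesis for the PMF term. If anything, you are slightly more careful than the paper in two places: you explicitly account for the extra non-negative summands indexed by $i\in\mathcal{B}_t\setminus\A_t$ (the paper's displayed difference of marginals sums only over $\A_t$, leaving those terms implicit), and you flag the boundary comparison $|\A_t|=K<|\mathcal{B}_t|$ as requiring its own argument, whereas the paper restricts the sub-modularity computation to $|\A_t|>K$.
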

\begin{proof}
For any $a\notin \A_t$ with $|\A_t|>K$, the marginal reward is:
\begin{align*} 
\frac{\partial \expect\big[R_{\A_t}[t]\big]}{\partial a} &= \sum_{i\in \A_t} \text{Pr}((s_i[t]=0)\cap {\mathcal E}_{\A_t+a}^c)\nonumber \\
&+\text{Pr}(s_a[t]=0)\text{Pr}(\|\bs_{\A_t}\|_1 \leq \Gamma_{\A_t})\\
						&- \sum_{i\in \A_t} \text{Pr}((s_i[t]=0)\cap {\mathcal E}_{\A_t}^c)\\
						&=\sum_{i\in \A_t}\omega_i[t](F_{\|\bs_{\A_t-i}\|_1}^{c}(\Gamma_{\A_t})-F_{\|\bs_{\A_t+a-i}\|_1}^{c}(\Gamma_{\A_t}))\nonumber \\&+\omega_a[t](1-\rho_{\A_t}),
\end{align*}
which leads to the expression in \eqref{eq.marginalreward}. The proof of \eqref{eq.marginalrewardAt=K} is along the same lines, except that $\rho_{\A_t}\neq F_{\|\bs_{\A_t}\|_1}^{c}(\lceil K/2\rceil-1)$ due to the discontinuity in the erasure probability. For $|\A_t|>K$, returns are diminishing as shown next:
\begin{align} 
&\frac{\partial \expect\big[R_{\A_t}[t]\big]}{\partial a}-\frac{\partial \expect\big[R_{{\mathcal B}_t}[t]\big]}{\partial a}
					= \omega_a[t](\rho_{{\mathcal B}_t}-\rho_{\A_t})\nonumber \\ &+\sum_{i\in {\mathcal A}_t}\omega_i[t]
					\left(\frac{\partial F^c_{\|\bs_{\mathcal{B}_t-i}\|_1}(\Gamma_{\A_t})}{\partial a}-\frac{\partial F^c_{\|\bs_{\A_t-i}\|_1}(\Gamma_{\A_t})}{\partial a}\right),
\label{eq.marginal_reward_diff}
\end{align}
where in the last equation, the first term in the Right-Hand Side (RHS) is positive according to Lemma \ref{lemma.erasure_risk}. Using \eqref{eq.PMF_recursion}, we can express $\frac{\partial F^c_{\|\bs_{\A_t-i}\|_1}(\Gamma_{\A_t})}{\partial a}$ as
\begin{align} 
\label{eq.marginal_cdf}
\frac{\partial F^c_{\|\bs_{\A_t-i}\|_1}(\Gamma_{\A_t})}{\partial a} = (1-\omega_a[t])P_{\|\bs_{\A_t-i}\|_1}(\Gamma_{\A_t}).
\end{align}
Then the second term in RHS of \eqref{eq.marginal_reward_diff} can be written as 
\be 
\sum_{i\in \A_t}\omega_i[t](1-\omega_a[t])(P_{\|\bs_{\mathcal{B}_t-i}\|_1}(\Gamma_{\A_t})-P_{\|\bs_{\A_t-i}\|_1}(\Gamma_{\A_t})),
\ee 
where all the terms in the summation are positive if $P_{\|\bs_{\A_t}\|_1}(\Gamma_{\A_t})$ is a monotonic non-decreasing set function with respect to $\A_t$ for $|\A_t|> K$.
\end{proof}
\begin{lemma}
\label{lemma.max_marginal} {\it For any given set $\A_t$, the marginal reward $\frac{\partial \expect\big[R_{\A_t}[t]\big]}{\partial a}$ is maximized by adding the element 
\be 
a = \argmax_{i \in \mathcal{N}\setminus \A_t} ~  \omega_i[t].
\ee}
\end{lemma}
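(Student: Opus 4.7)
The plan is to use the explicit formulas for the marginal reward furnished by Lemma \ref{lemma.reward-submod}, namely \eqref{eq.marginalreward} for $|\A_t|>K$ and \eqref{eq.marginalrewardAt=K} for $|\A_t|=K$. In both cases I will show that $\frac{\partial \expect[R_{\A_t}[t]]}{\partial a}$ is an affine function of $\omega_a[t]$ with a non-negative slope, while every remaining quantity in the expression depends only on $\A_t$ (through the beliefs $\omega_i[t]$ for $i\in\A_t$ and through PMFs/CDFs of $\|\bs_{\A_t-i}\|_1$) and not on the particular choice of the element $a$ being added. Since an affine function with non-negative slope is maximized by maximizing its argument, the optimal element is then $a=\argmax_{i\in\mathcal{N}\setminus\A_t}\omega_i[t]$.

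For $|\A_t|>K$, I would substitute the identity $\frac{\partial F^c_{\|\bs_{\A_t-i}\|_1}(\Gamma_{\A_t})}{\partial a}=(1-\omega_a[t])P_{\|\bs_{\A_t-i}\|_1}(\Gamma_{\A_t})$ from \eqref{eq.marginal_cdf} into \eqref{eq.marginalreward}. Linearity in $\omega_a[t]$ follows at once, with slope equal to $(1-\rho_{\A_t})+\sum_{i\in\A_t}\omega_i[t]P_{\|\bs_{\A_t-i}\|_1}(\Gamma_{\A_t})$, which is non-negative because $\rho_{\A_t}\in[0,1]$ and the PMF values are non-negative.

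For $|\A_t|=K$ the dependence on $a$ is hidden inside $F^c_{\|\bs_{\A_t+a-i}\|_1}(\Gamma)$ with $\Gamma\triangleq\lceil K/2\rceil-1$. I would apply the PMF recursion \eqref{eq.PMF_recursion} to the set $\A_t-i$ augmented by $a$, writing $F^c_{\|\bs_{\A_t+a-i}\|_1}(\Gamma)=\omega_a[t]F^c_{\|\bs_{\A_t-i}\|_1}(\Gamma)+(1-\omega_a[t])F^c_{\|\bs_{\A_t-i}\|_1}(\Gamma-1)$. Substituting into \eqref{eq.marginalrewardAt=K}, collecting the $\omega_a[t]$ terms, and using $F^c_{\|\bs_{\A_t-i}\|_1}(\Gamma-1)-F^c_{\|\bs_{\A_t-i}\|_1}(\Gamma)=P_{\|\bs_{\A_t-i}\|_1}(\Gamma)$ yield an affine expression in $\omega_a[t]$ with slope $F_{\|\bs_{\A_t}\|_1}(\Gamma)+\sum_{i\in\A_t}\omega_i[t]P_{\|\bs_{\A_t-i}\|_1}(\Gamma)\geq 0$.

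The main (and only mildly delicate) obstacle is carrying out the algebraic rearrangement in the $|\A_t|=K$ case carefully enough to certify that, after collecting terms, no residual dependence on the identity of $a$ other than through the scalar $\omega_a[t]$ survives; this is where the discontinuity in $\rho_{\A_t}$ between $|\A_t|=K$ and $|\A_t|=K+1$ (observed already in \eqref{eq.rho_marginal}) might mislead. Once this is verified, both cases reduce to maximizing an affine function of $\omega_a[t]$ with non-negative slope, and the conclusion of the lemma follows.
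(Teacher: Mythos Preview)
Your proposal is correct and follows essentially the same route as the paper: substitute \eqref{eq.marginal_cdf} into \eqref{eq.marginalreward}, collect terms to exhibit the marginal reward as an affine function of $\omega_a[t]$ with non-negative slope, and conclude. The paper only treats $|\A_t|>K$ explicitly and asserts that the case $|\A_t|=K$ ``is easily generalized''; your sketch for that case (via the decomposition of $F^c_{\|\bs_{\A_t+a-i}\|_1}(\Gamma)$ using the PMF recursion) is correct and in fact supplies the detail the paper omits.
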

\begin{proof}
We consider the case of $|\A_t|>K$, but the proof is easily generalized to the case $\|\A_t\|=K$.
From \eqref{eq.marginalreward}, the marginal reward of adding $a$ to set $\A_t$ is equal to
$$
\frac{\partial \expect\big[R_{\A_t}[t]\big]}{\partial a}=\omega_a[t](1-\rho_{\A_t})-\sum_{i\in \A_t}\omega_i[t]\frac{\partial F^c_{\|\bs_{\A_t-i}\|_1}(\Gamma_{\A_t})}{\partial a}.
$$
Replacing \eqref{eq.marginal_cdf} in \eqref{eq.marginalreward} results in the following expression for the marginal reward
\begin{align}
\frac{\partial \expect\big[R_{\A_t}[t]\big]}{\partial a} &=\omega_a[t](1-\rho_{\A_t})-\sum_{i\in \A_t}\omega_i[t](1-\omega_a[t])P_{\|\bs_{\A_t-i}\|_1}(\Gamma_{\A_t})  \nonumber \\
&= \omega_a[t]\left(1-\rho_{\A_t} + \sum_{i\in \A_t}\omega_i[t]P_{\|\bs_{\A_t-i}\|_1}(\Gamma_{\A_t})\right) \nonumber \\ &- \sum_{i\in \A_t}\omega_i[t]P_{\|\bs_{\A_t-i}\|_1}(\Gamma_{\A_t}).
\end{align}
Since $1-\rho_{\A_t} + \sum_{i\in \A_t}\omega_i[t]P_{\|\bs_{\A_t-i}\|_1}(\Gamma_{\A_t}) \geq 0$, the marginal reward is maximized whenever $\omega_a[t]$ is the maximum possible value in the set $\N\setminus \A_t$ which completes the proof.
\end{proof}

\subsubsection{The CCS busy channels collector}
The MAB formulation is useful also in sensing applications aimed at detecting and tracking signal activities in a set of sub-channels. In this case, the CR earns a unit of reward for each non-zero entry in $\boldsymbol{\alpha}[t]$ that is detected correctly:
\be\label{eq:rewardCCS-busy}
R_{\A_t}[t]=\sum_{i\in \A_t} s_i[t].
\ee

Lemma \ref{lemma:css_busy} introduces an expression for the the expected immediate reward.
\begin{lemma}
\label{lemma:css_busy} {\it The expected immediate reward for \eqref{eq:rewardCCS-busy} given the action $\A_t$ is
\be 
\label{eq:expected_reward2}
\expect\big[R_{\A_t}[t]\big] = |\A_t|(1-\rho_{\A_t}) - \sum_{i\in \A_t}\omega_i[t]F_{\|\bs_{\A_t-i}\|_1}(\Gamma_{\A_t}).
\ee }
\end{lemma}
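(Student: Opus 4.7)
The plan is to mirror the proof of Lemma \ref{lemma:expected_reward1} almost verbatim, exploiting the complementary structure of the busy-channel reward relative to the idle-channel reward. Under Assumption $2$, the reward \eqref{eq:rewardCCS-busy} should be understood as being collected only on the no-erasure event, so
\begin{equation*}
R_{\A_t}[t] = \mathbb{1}_{\mathcal{E}^c_{\A_t}} \sum_{i\in\A_t} s_i[t],
\end{equation*}
and taking expectations yields
\begin{equation*}
\expect\big[R_{\A_t}[t]\big] = \sum_{i\in\A_t} \text{Pr}\big(\{s_i[t]=1\}\cap \mathcal{E}^c_{\A_t}\big).
\end{equation*}

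Next I would split each summand using the partition of $\mathcal{E}^c_{\A_t}$ by the value of $s_i[t]$:
\begin{equation*}
\text{Pr}\big(\{s_i[t]=1\}\cap \mathcal{E}^c_{\A_t}\big) = \text{Pr}(\mathcal{E}^c_{\A_t}) - \text{Pr}\big(\{s_i[t]=0\}\cap \mathcal{E}^c_{\A_t}\big).
\end{equation*}
By definition $\text{Pr}(\mathcal{E}^c_{\A_t}) = 1-\rho_{\A_t}$, and the second term was already computed inside the proof of Lemma \ref{lemma:expected_reward1} (equation \eqref{eq:expected_rewardproof}) as $\omega_i[t]\,F_{\|\bs_{\A_t-i}\|_1}(\Gamma_{\A_t})$, using the independence between $s_i[t]$ and $\bs_{\A_t-i}$ together with $\mathcal{E}^c_{\A_t}=\{\|\bs_{\A_t-i}\|_1+s_i[t]\leq \Gamma_{\A_t}\}$ and the fact that when $s_i[t]=0$ the constraint reduces to $\|\bs_{\A_t-i}\|_1\leq \Gamma_{\A_t}$.

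Summing the resulting expression $(1-\rho_{\A_t}) - \omega_i[t]F_{\|\bs_{\A_t-i}\|_1}(\Gamma_{\A_t})$ over $i\in\A_t$ immediately produces the claimed identity \eqref{eq:expected_reward2}, since the first term is independent of $i$ and contributes $|\A_t|(1-\rho_{\A_t})$.

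Because every ingredient is already established in Lemma \ref{lemma:expected_reward1} and in the definition of $\rho_{\A_t}$, I do not anticipate a real obstacle; the only point requiring minor care is making explicit the complementary-event identity $\text{Pr}(\{s_i=1\}\cap \mathcal{E}^c_{\A_t})=\text{Pr}(\mathcal{E}^c_{\A_t})-\text{Pr}(\{s_i=0\}\cap \mathcal{E}^c_{\A_t})$, and confirming that it is valid for both regimes $|\A_t|=K$ (where $\rho_{\A_t}=0$ and the formula degenerates to $\sum_i\omega_i[t]\cdot 0 $ subtracted from $|\A_t|$, in agreement with the fact that without erasure the expected number of busy entries is $\sum_i(1-\omega_i[t])$) and $K<|\A_t|\leq N$, so that the single expression \eqref{eq:expected_reward2} covers both cases uniformly.
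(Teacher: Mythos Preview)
Your argument is correct and takes a slightly different route from the paper. The paper computes $\text{Pr}(\{s_i[t]=1\}\cap\mathcal{E}^c_{\A_t})$ directly: it conditions on $s_i[t]=1$, uses independence to obtain $(1-\omega_i[t])\,F_{\|\bs_{\A_t-i}\|_1}(\Gamma_{\A_t}-1)$, and then invokes the recursion \eqref{eq.rhoCDF} for $\rho_{\A_t}$ to rewrite $(1-\omega_i[t])F^{c}_{\|\bs_{\A_t-i}\|_1}(\Gamma_{\A_t}-1)=\rho_{\A_t}-\omega_i[t]F^{c}_{\|\bs_{\A_t-i}\|_1}(\Gamma_{\A_t})$. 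You instead apply the partition identity $\text{Pr}(\{s_i=1\}\cap\mathcal{E}^c)=\text{Pr}(\mathcal{E}^c)-\text{Pr}(\{s_i=0\}\cap\mathcal{E}^c)$ and import the idle-channel computation from Lemma~\ref{lemma:expected_reward1} verbatim. Your route is marginally shorter because it bypasses the explicit use of \eqref{eq.rhoCDF}; the paper's route, on the other hand, displays the per-term value $(1-\omega_i[t])F_{\|\bs_{\A_t-i}\|_1}(\Gamma_{\A_t}-1)$ explicitly, which may be informative for later use.

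One small slip in your parenthetical sanity check for $|\A_t|=K$: you wrote that the subtracted term becomes $\sum_i\omega_i[t]\cdot 0$, but in fact $\Gamma_{\A_t}=K$ and $\|\bs_{\A_t-i}\|_1\leq K-1$, so $F_{\|\bs_{\A_t-i}\|_1}(\Gamma_{\A_t})=1$. The subtracted term is therefore $\sum_i\omega_i[t]$, and the formula yields $K-\sum_i\omega_i[t]=\sum_i(1-\omega_i[t])$, which is exactly the conclusion you stated. This does not affect the main argument.
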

\begin{proof}
In this case:
\begin{align*}
\expect\big[R_{\A_t}[t]\big] 
 &= \sum_{i\in \A_t} \text{Pr}((s_i[t]=1)\cap {\mathcal E}_{\A_t}^c)\\
					      &=\sum_{i\in \A_t} \text{Pr}((s_i[t]=1)\cap (\|\bs_{\A_t-i}\|_1+s_i[t] \leq \Gamma_{\A_t}))\\
					      &=\sum_{i\in \A_t} \text{Pr}(s_i[t]=1)\;\text{Pr}(\|\bs_{\A_t-i}\|_1 \leq \Gamma_{\A_t}-1)\\
					      &=\sum_{i\in \A_t} (1-\omega_i[t])(1-F^{c}_{\|\bs_{\A_t-i}\|_1 }(\Gamma_{\A_t}-1))\\
					      &\stackrel{a}{=}\sum_{i\in \A_t} (1-\omega_i[t])-\sum_{i\in \A_t} [\rho_{\A_t} - \omega_i[t] F^c_{\|\bs_{\A_t-i}\|_1}(\Gamma_{\A_t})]\\
					      & = |\A_t|(1-\rho_{\A_t}) - \sum_{i\in \A_t}\omega_i[t]F_{\|\bs_{\A_t-i}\|_1}(\Gamma_{\A_t}).
\end{align*}
where $a$ follows from \eqref{eq.rhoCDF}. 
\end{proof}

The interesting observation about the structure of the reward in this case is that for no region in the state space this function is sub-modular or super-modular, and this makes the problem NP-hard. It means that to find the myopic policy, the CR must compute the expected reward for all the possible actions $\A_t$ with $K\leq |\A_t|\leq N$. This is completely in contrast to the CCS idle channels collector case as we see in the analysis of the myopic policy in the next Section.

\section{Study of the Optimal CCS Policy}
\label{sec:optimal_policy}
The maximum expected total reward obtainable starting from slot $t$ given the current belief vector $\boldsymbol{\Omega}[t]$ is the so called value function denoted by $V_t(\boldsymbol{\Omega}[t])$. It includes two terms: the expected immediate reward $\expect[R_{\A_t}[t]]$ and the maximum expected future reward $V_{t+1}(\mathcal{T}(\boldsymbol{\Omega}[t] | \A_t, \bs_{\A_t}))$, given that the user takes action $\A_t$ and the reward attainable through the observation $\boldsymbol{\theta}_{\A_t}$ in slot $t$ is only a function of the observable state $\bs_{\A_t}$, based on Assumption $2$ and \eqref{eq.beliefupd}.  
Averaging over all possible states $\bs_{\A_t}$ and maximizing over all actions $\A_t$, we obtain the following recursion, whose solution corresponds to a dynamic program 
\begin{align}
\label{value_function}
V_T(\boldsymbol{\Omega}[T]) &= \max_{\A_t} ~~~ \expect\big[R_{\A_t}[T]\big] \nonumber \\
V_t(\boldsymbol{\Omega}[t]) &= \max_{\A_t} ~~~ \Big [\expect\big[R_{\A_t}[t]\big] + \sum_{\bs \in \{0,1\}^{|\A_t|}} \text{Pr}(\bs_{\A_t}=\bs |\A_t) \nonumber\\
&~~~~~~~~~~~~~\cdot V_{t+1}\big(\mathcal{T}(\boldsymbol{\Omega}[t] | \A_t, \bs_{\A_t}=\bs)\big)\Big]\,,
\end{align}
where the summation is over all possibilities for the observable state $\bs_{\A_t}$. The optimal policy $\pi^*$ and its performance $V_1(\boldsymbol{\Omega}[1])$ are computationally prohibitive to derive brute force,  since the belief vector $\boldsymbol{\Omega}[t] \in [0,1]^N$ lies in an uncountable set. 

It is a standard step to study first the {\it myopic policy}  $\hat{\pi}$, that is a stationary policy that maximizes the expected immediate reward $\expect\big[R_{\A_t}[t]\big]$ under the current belief vector $\boldsymbol{\Omega}[t]$, disregarding the effect of the current action on the future reward and is expressed as
\be 
\label{eq:myopic_p}
\A^*_t = \argmax_{\A} ~~ \expect\big[R_{\A}[t]\big]\,.
\ee 
Solving \eqref{eq:myopic_p} is also computationally intensive, given that $\expect\big[R_{\A}[t]\big]$ is a set function and $K\leq |\A|\leq N$. In the following, we restrict our focus on the study of the myopic policy, the solution of \eqref{eq:myopic_p}. 

\subsection{Analysis of the myopic CCS policy for the empty channels collector}\label{sec:myopic_policy}
We consider a CR that is an {\it empty channels collector}. 
We denote by $(n_1,\ldots,n_N)$ the permutation of the indices that orders the belief vector as $\omega_{n_1}[t]\geq \omega_{n_2}[t]\geq\ldots\geq\omega_{n_N}[t]$. 
To find the myopic policy, we need to solve the following optimization problem to find the action with $|\A_t| \geq K$:
\be  
\label{eq:opt_A}
\A^*_t  = \argmax_{\A \in 2^{\mathcal N}, |\A|\geq K} ~~ \sum_{i\in \A}\omega_i[t]F_{\|\bs_{\A_t-i}\|_1}(\Gamma_{\A_t}).
\ee 
We first establish an order for the class of sets with fixed cardinality $|\A_t| = M$, i.e. the set:
 \be
 {\mathcal P}^{(M)} \triangleq \{\A: |\A|=M,~\A \in 2^{\mathcal N}\}.
 \ee
For $|\A_t| = K$, considering $\omega_i[t]$ as the weights of each element, it is well known that the maximum weight over a matroid in general is obtained by a greedy algorithm \cite{edmond}:
\begin{align}
\label{eq:R_tilde}
{\mathcal R}^{(K)} (\boldsymbol{\Omega}[t])&\triangleq \max_{\A:~|\A|=K} ~ \expect\big[R_{\A}[t]\big] \nonumber \\ &= \max_{\A} ~ \sum_{ \stackrel{i \in \A}{|\A|=K}} \omega_{i}[t] = \sum_{i=1}^{K}\omega_{n_i}[t]\,,
\end{align}
which corresponds to the set $$\A^{(K)}_t \triangleq \{n_1,n_2,\ldots,n_K\}.$$ Next we extend this property to $ {\mathcal P}^{(M)}$ with $K<M\leq N$ and prove that $${\mathcal R}^{(M)} (\boldsymbol{\Omega}[t]) \triangleq \max_{\A:~|\A|=M} ~ \expect\big[R_{\A}[t]\big]=\expect\big[R_{\A^{(M)}}[t]\big].$$
\begin{lemma}
\label{lemma_1}{\it 
Any set $\A \in  {\mathcal P}^{(M)}$ will have expected reward $\expect\big[R_{\A}[t]\big]$ no larger than the set 
\be\label{eq.AM}
\A^{(M)} = \{n_1,n_2,\ldots,n_{M}\},
\ee
which includes the $M$ components corresponding to the entries with the largest belief values from the vector $\boldsymbol{\Omega}[t]$. }
\end{lemma}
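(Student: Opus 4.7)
The plan is to use an exchange argument. Suppose $\A \in \mathcal{P}^{(M)}$ is distinct from $\A^{(M)}$. Because the two sets have the same cardinality, there exist indices $j \in \A \setminus \A^{(M)}$ and $i \in \A^{(M)} \setminus \A$, and by construction of $\A^{(M)}$ one has $\omega_i[t] \geq \omega_j[t]$. Define $\mathcal{C} \triangleq \A - j$ and $\A' \triangleq \mathcal{C} + i$. I will show that $\expect[R_{\A'}[t]] \geq \expect[R_\A[t]]$. Since each such swap strictly decreases $|\A \setminus \A^{(M)}|$, iterating it transforms $\A$ into $\A^{(M)}$ in finitely many steps while never decreasing the expected reward, which proves the claim.

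To compare the two rewards, apply \eqref{eq:expected_reward} to both $\A = \mathcal{C}+j$ and $\A' = \mathcal{C}+i$ and isolate the contributions of the swapped indices. Noting that $\A - k = \mathcal{C}+j-k$ for $k \in \mathcal{C}$ and $\A - j = \mathcal{C}$, grouping terms yields
\begin{align*}
\expect[R_{\A'}[t]] - \expect[R_\A[t]] &= (\omega_i[t] - \omega_j[t])\, F_{\|\bs_\mathcal{C}\|_1}(\Gamma_{\A_t}) \\
&\quad + \sum_{k \in \mathcal{C}} \omega_k[t] \bigl[F_{\|\bs_{\mathcal{C}+i-k}\|_1}(\Gamma_{\A_t}) - F_{\|\bs_{\mathcal{C}+j-k}\|_1}(\Gamma_{\A_t})\bigr].
\end{align*}
Summing the PMF recursion \eqref{eq.PMF_recursion} up to $\Gamma_{\A_t}$ gives the CDF identity $F_{\|\bs_{\mathcal{B}+a}\|_1}(\Gamma) = F_{\|\bs_\mathcal{B}\|_1}(\Gamma-1) + \omega_a[t]\, P_{\|\bs_\mathcal{B}\|_1}(\Gamma)$, so applying this identity to each bracketed difference with $\mathcal{B} = \mathcal{C} - k$ collapses it to $(\omega_i[t] - \omega_j[t])\, P_{\|\bs_{\mathcal{C}-k}\|_1}(\Gamma_{\A_t})$. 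Substituting back, the total difference factors as
\begin{equation*}
\expect[R_{\A'}[t]] - \expect[R_\A[t]] = (\omega_i[t] - \omega_j[t])\left[F_{\|\bs_\mathcal{C}\|_1}(\Gamma_{\A_t}) + \sum_{k \in \mathcal{C}} \omega_k[t]\, P_{\|\bs_{\mathcal{C}-k}\|_1}(\Gamma_{\A_t})\right] \geq 0,
\end{equation*}
since $\omega_i[t] \geq \omega_j[t]$ and every term in the bracket is a probability.

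The main obstacle is spotting the telescoping identity that makes the cross-terms collapse cleanly; without it, one faces a sum of $M-1$ differences of CDFs of partial sums of dependent Bernoullis, whose sign is not obvious a priori. No separate boundary-case analysis is needed: when $M = K$ one has $\Gamma_{\A_t} = K$, so $F_{\|\bs_\mathcal{C}\|_1}(K) = 1$ and $P_{\|\bs_{\mathcal{C}-k}\|_1}(K) = 0$ (because $|\mathcal{C}-k| = K-2$), and the factored formula degenerates to the familiar greedy inequality $\expect[R_{\A'}[t]] - \expect[R_\A[t]] = \omega_i[t] - \omega_j[t]$, consistent with \eqref{eq:R_tilde}; when $M > K$ the bracket can be strictly positive, giving a strict increase whenever $\omega_i[t] > \omega_j[t]$.
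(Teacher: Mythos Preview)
Your proof is correct. The exchange computation is sound: the CDF identity you derive from \eqref{eq.PMF_recursion} is exactly $F_{\|\bs_{\mathcal{B}+a}\|_1}(\Gamma)=(1-\omega_a)F_{\|\bs_{\mathcal{B}}\|_1}(\Gamma-1)+\omega_aF_{\|\bs_{\mathcal{B}}\|_1}(\Gamma)=F_{\|\bs_{\mathcal{B}}\|_1}(\Gamma-1)+\omega_a P_{\|\bs_{\mathcal{B}}\|_1}(\Gamma)$, and since $|\A|=|\A'|=M$ the threshold $\Gamma$ is common to both sides, so the factorization goes through; the iteration terminates because each swap increases $|\A\cap\A^{(M)}|$ by one.

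The paper takes a different route: it argues by induction on $M$, starting from the known greedy optimum at $M=K$ and, for the step $M\to M+1$, writes any $\A'\in\mathcal{P}^{(M+1)}$ as $\A\cup\{i\}$ and bounds the increment via Lemma~\ref{lemma.max_marginal}. Your exchange argument is in fact the same algebra as Lemma~\ref{lemma.max_marginal} evaluated at the common base set $\mathcal{C}$, since $\expect[R_{\A'}]-\expect[R_{\A}]=\partial\expect[R_{\mathcal C}]/\partial i-\partial\expect[R_{\mathcal C}]/\partial j$, but you rederive the affine dependence on $\omega_a$ from scratch rather than citing that lemma. What your organization buys is a self-contained argument that stays at fixed cardinality (so $\Gamma$ never changes mid-proof) and avoids the somewhat delicate bookkeeping in the paper's induction step about which $n_{M+1}$ is available to add; what the paper's approach buys is a direct link to Lemma~\ref{lemma.max_marginal}, which is reused later for the greedy stopping rule in Corollary~\ref{cor.greedy}.
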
  
\begin{proof}
The Lemma can be proven by induction. As we said, it is certainly true for $|\A| = K$ \cite{edmond}. Let us assume that it is true for ${\mathcal P}^{(M)}$, $M\geq K$ and that ${\mathcal R}^{(M)} (\boldsymbol{\Omega}[t])= \expect\big[R_{\A^{(M)}}[t]\big]\geq  \expect\big[R_{\A}[t]\big]$ for all $\A \in {\mathcal P}^{(M)}$. We can prove that this has to be true for ${\mathcal P}^{(M+1)}$.   All sets $\A'\in {\mathcal P}^{(M+1)}$ can be formed as $\A' \triangleq \A \cup\{i\}$ starting from a set $\A\in {\mathcal P}^{(M)}$ and adding an $i \in \mathcal{N}\setminus\A$. 
Thanks to Lemma \ref{lemma.max_marginal} and because of the hypothesis made by induction, we have:
\begin{align} 
\expect\big[R_{\A'}[t]\big]&=\expect\big[R_{\A}[t]\big]+\frac{\partial \expect\big[R_{\A}[t]\big]}{\partial i} \nonumber \\
					   &\leq {\mathcal R}^{(M)} (\boldsymbol{\Omega}[t])+\frac{\partial \expect\big[R_{\A}[t]\big]}{\partial n_{M+1}},
\end{align}
and the inequality is replaced by equality if and only if $\A' = \{n_1,\ldots,n_{M+1}\}$ which completes the proof.
\end{proof}

A direct and fundamental consequence of Lemma \ref{lemma_1} is presented in the following Lemma.
\begin{corollary}\label{cor.OPTMYOP_general}{\it 
The optimum myopic policy is to select the action:
\be
\A_t^* \triangleq \{n_1,\ldots,n_{M*}\},
\ee
where (c.f. \eqref{eq.AM}):
\be
M^*=\argmax_{K\leq M\leq N} ~~~ \expect\big[R_{\A^{(M)}}[t]\big].
\label{eq:myopic_opt}
\ee}
\end{corollary}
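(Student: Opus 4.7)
The plan is to get this Corollary as an immediate consequence of Lemma \ref{lemma_1}, by decomposing the myopic maximization according to cardinality. Specifically, I would first write the myopic problem \eqref{eq:myopic_p} as a double maximization,
\begin{equation*}
\A_t^* \;=\; \argmax_{\A \in 2^{\mathcal N},\,|\A|\geq K}\, \expect\!\big[R_{\A}[t]\big] \;=\; \argmax_{K\leq M\leq N}\;\Bigl(\max_{\A \in {\mathcal P}^{(M)}} \expect\!\big[R_{\A}[t]\big]\Bigr),
\end{equation*}
which is valid because the feasible family $\{\A \in 2^{\mathcal N}:\,|\A|\geq K\}$ partitions into the classes ${\mathcal P}^{(M)}$ of fixed cardinality $M\in\{K,\ldots,N\}$ introduced just before Lemma \ref{lemma_1}.

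Next, I would invoke Lemma \ref{lemma_1} to evaluate the inner maximization for each $M$. That Lemma says precisely that the cardinality-$M$ maximizer is the top-$M$ belief set $\A^{(M)}=\{n_1,\ldots,n_M\}$, so that
\begin{equation*}
\max_{\A \in {\mathcal P}^{(M)}} \expect\!\big[R_{\A}[t]\big] \;=\; \expect\!\big[R_{\A^{(M)}}[t]\big],\qquad K\leq M\leq N.
\end{equation*}
Substituting this into the outer optimization yields the claim: the optimal myopic action must coincide with $\A^{(M^*)}$ for some $M^*$ attaining the outer maximum, which is exactly \eqref{eq:myopic_opt}. The optimal $\A_t^*=\{n_1,\ldots,n_{M^*}\}$ is then just the top-$M^*$ belief set.

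There is essentially no obstacle left once Lemma \ref{lemma_1} is in hand, since the proof reduces to observing that optimizing over all admissible cardinalities can be done by first ranking within each cardinality class and then comparing the ranked values across $M$. The only thing worth emphasizing in the write-up is that, although the within-class optimization has collapsed to a trivial sorting by belief, a one-dimensional search over the $N-K+1$ candidate cardinalities is still needed because the expected reward $\expect[R_{\A^{(M)}}[t]]$ as a function of $M$ is not in general monotone (it balances the sub-modular gain in reward from enlarging $\A$ against the super-modular erasure risk characterized by Lemmas \ref{lemma.erasure_risk}--\ref{lemma.rho-supmod}). This is what makes the corollary a practically meaningful structural simplification rather than a closed-form solution.
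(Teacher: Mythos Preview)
Your proposal is correct and is exactly the argument the paper has in mind: the corollary is stated there as a ``direct and fundamental consequence'' of Lemma~\ref{lemma_1} without an explicit proof, and your cardinality decomposition followed by applying Lemma~\ref{lemma_1} to each ${\mathcal P}^{(M)}$ is precisely how that direct consequence is obtained.
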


What is nice about \eqref{eq:myopic_opt} is that we conclude that finding the optimum myopic policy at time $t$ is not NP-hard, it just requires sorting the values of the beliefs, computing ${\mathcal R}^{(M)} (\boldsymbol{\Omega}[t])$ for $K\leq M\leq N$ and finally finding their maximum value. The complexity of this procedure is polylog of $N$. 

There are cases, however, when the process of finding the optimum myopic policy can be even faster by avoiding the computation of ${\mathcal R}^{(M)} (\boldsymbol{\Omega}[t])$ for all $K\leq M\leq N$. Algorithm \ref{alg:greedy} introduces a {\it greedy algorithm} to find the myopic policy which reduces the computational complexity for large values of $N$. Algorithm \ref{alg:greedy} starts from $K$ elements in the optimum set and includes entries until the marginal reward is greater than or equal to $0$ and stops when for the first time the marginal reward value becomes negative. 
\begin{algorithm}
\caption{The Greedy Myopic Algorithm}
\begin{algorithmic}[1] \label{alg:greedy}
\REQUIRE The permutation $(n_1,\ldots,n_N)$ according to $\boldsymbol{\Omega}[t]$
\vspace{3pt}
\STATE {\bf Initialize}: $i=K$ and $\A_t^*=\A^{(K)}=\{n_1,\ldots,n_K\}$. 
\vskip5pt
\STATE {\bf While} $\frac{\partial \expect\big[R_{\A_t^*}[t]\big]}{\partial n_{i+1}}\geq 0$ and $i<N$ 
\vspace{5pt}
\STATE update $\A_t^*=\{n_1,\ldots,n_{i+1}\}$
\vspace{3pt}
\STATE set $i = i+1$
\STATE {\bf Do} 
\end{algorithmic}
\end{algorithm}

By introducing more restrictive conditions, the following corollary establishes a sufficient condition for the greedy algorithm introduced in Algorithm \ref{alg:greedy} to be the optimum myopic policy. 
\begin{corollary}\label{cor.greedy}{\it 
If $\boldsymbol{\Omega}[t]$ is such that $P_{\|\bs_{\A^{(M)}}\|_1}(\Gamma_{\A^{(M)}})$ is a monotonic non-decreasing set function with respect to $M$, then the greedy procedure is the optimal myopic policy. }
\end{corollary}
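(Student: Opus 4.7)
The plan is to show that, under the stated hypothesis, the finite sequence $\{\expect[R_{\A^{(M)}}[t]]\}_{M=K}^{N}$ is unimodal in $M$: it rises monotonically from $M=K$ to some peak $M^\star$ and then falls monotonically. Given unimodality, Algorithm~\ref{alg:greedy}, which stops the first time the incremental gain $\partial \expect[R_{\A_t^*}[t]]/\partial n_{i+1}$ becomes negative, is guaranteed to halt exactly at $M^\star$, and by Corollary~\ref{cor.OPTMYOP_general} this coincides with the optimum myopic policy.

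The core technical step will be to establish the following chain inequality between consecutive greedy marginals:
\begin{equation}\label{eq.chainmarg}
\frac{\partial \expect[R_{\A^{(M)}}[t]]}{\partial n_{M+1}} \;\geq\; \frac{\partial \expect[R_{\A^{(M+1)}}[t]]}{\partial n_{M+2}},\qquad K\leq M\leq N-1.
\end{equation}
I would derive (\ref{eq.chainmarg}) by concatenating two single-step inequalities. First, by Lemma~\ref{lemma.max_marginal} applied to the set $\A^{(M)}$, the marginal gain $\partial \expect[R_{\A^{(M)}}[t]]/\partial a$ over $a\in\N\setminus\A^{(M)}=\{n_{M+1},\ldots,n_N\}$ is maximized by the element with the largest belief, namely $n_{M+1}$, so
\[
\frac{\partial \expect[R_{\A^{(M)}}[t]]}{\partial n_{M+1}} \;\geq\; \frac{\partial \expect[R_{\A^{(M)}}[t]]}{\partial n_{M+2}}.
\]
Second, the monotonicity hypothesis of the corollary is precisely what the proof of Lemma~\ref{lemma.reward-submod} uses to conclude sub-modularity along the chain $\A^{(M)}\subset \A^{(M+1)}$; instantiating that sub-modular inequality at $a=n_{M+2}\notin \A^{(M+1)}$ gives
\[
\frac{\partial \expect[R_{\A^{(M)}}[t]]}{\partial n_{M+2}} \;\geq\; \frac{\partial \expect[R_{\A^{(M+1)}}[t]]}{\partial n_{M+2}},
\]
and the two bounds compose into (\ref{eq.chainmarg}).

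With (\ref{eq.chainmarg}) in hand, unimodality is immediate: the incremental gain along the greedy chain is non-increasing in $M$, so the first index $M^\star+1$ at which it turns strictly negative is the unique crossover point, after which all subsequent increments remain non-positive. Thus $\expect[R_{\A^{(M)}}[t]]$ attains its maximum over $\{K,\ldots,N\}$ precisely at $M^\star$, and Corollary~\ref{cor.OPTMYOP_general} identifies $\A^{(M^\star)}$ with the myopic optimum, which is what the greedy algorithm returns.

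The main obstacle I anticipate is the boundary transition from $M=K$ to $M=K+1$, where $\Gamma_{\A}$ jumps from $|\A|$ to $\lceil K/2\rceil -1$ and the marginal-reward formula switches between (\ref{eq.marginalreward}) and (\ref{eq.marginalrewardAt=K}), so that the erasure probability discontinuously jumps from $0$ to a positive value. Sub-modularity across this boundary needs a separate verification: either by directly computing the marginal decrement using the piecewise expression in Lemma~\ref{lemma.reward-submod}, or by exploiting the fact that $\rho_{\A^{(K)}}=0$ and $P_{\|\bs_{\A^{(K)}}\|_1}(\Gamma_{\A^{(K)}})$ can be compared to $P_{\|\bs_{\A^{(K+1)}}\|_1}(\Gamma_{\A^{(K+1)}})$ using the corollary's monotonicity hypothesis along the greedy chain. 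Once that boundary case is handled, the induction in $M$ reduces to the $|\A_t|>K$ branch of Lemma~\ref{lemma.reward-submod}, and the rest of the argument proceeds as sketched above.
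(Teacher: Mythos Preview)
Your proposal is correct and follows essentially the same approach as the paper: both arguments combine Lemma~\ref{lemma.max_marginal} (to compare marginals at different elements for a fixed set) with the sub-modularity furnished by Lemma~\ref{lemma.reward-submod} under the stated hypothesis (to compare marginals of the same element across nested sets), yielding the non-increasing greedy marginal sequence that guarantees Algorithm~\ref{alg:greedy} halts at the optimum of Corollary~\ref{cor.OPTMYOP_general}. Your explicit flag of the $M=K$ boundary transition is a point of care the paper's proof glosses over.
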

\begin{proof}
We know from Lemma \ref{lemma.reward-submod} that when $P_{\|\bs_{\A^{(i)}}\|_1}(\Gamma_{\A^{(i)}})$ is monotonic non-decreasing as $i$ increases, $\expect\big[R_{\A^{(i)}}[t]\big]$ is a sub-modular set function. Then, we can show that 
\be
\frac{\partial \expect\big[R_{\A^{(i)}}[t]\big]}{\partial n_{i+1}} \stackrel{(a)}{\geq} \frac{\partial \expect\big[R_{\A^{(i)}}[t]\big]}{\partial n_{j+1}} \stackrel{(b)}{\geq} \frac{\partial \expect\big[R_{\A^{(j)}}[t]\big]}{\partial n_{j+1}}, ~~~ \forall j\geq i
\ee 
where $(a)$ is concluded from Lemma \ref{lemma.max_marginal} knowing that $\omega_{n_{i+1}} \geq \omega_{n_{j+1}}$ and $(b)$ follows from the sub-modularity of the expected reward since $\A^{(i)}\subset \A^{(j)}$.
As a result, $\frac{\partial \expect\big[R_{\A^{(i)}}[t]\big]}{\partial n_{i+1}}$ is monotonic non-increasing w.r.t. $i$. Since $\frac{\partial \expect\big[R_{\A^{(i)}}[t]\big]}{\partial n_{i+1}}$ is monotonic non-increasing, at each step, the greedy policy (Algorithm \ref{alg:greedy}) includes the element that makes the set maximize the expected reward over all candidates in ${\mathcal P}^{(i+1)}$ and it stops when increasing  $i$ further, decreases the expected reward compared to ${\mathcal R}^{(t)}(\boldsymbol{\Omega}[t])$, knowing that from that point on all marginal rewards are indeed negative.  
\end{proof}

\section{Numerical Experiments}
In this Section, we evaluate numerically the performance of the myopic policy for CCS architecture and specifically compare it with the myopic policy for the $K$-arm selection problem \cite{Ahmad2} where the CR selects exactly $K$ out of the $N$ sub-channels to sense at each time slot $t$. In \cite{Ahmad2}, the authors have shown that for $K$-arm selection problem, the myopic policy is optimal when $p_{00} \geq p_{10}$.

\begin{figure}[!htb]

\begin{minipage}[b]{1.0\linewidth}
  \centering
  \centerline{\includegraphics[width=0.95\textwidth]{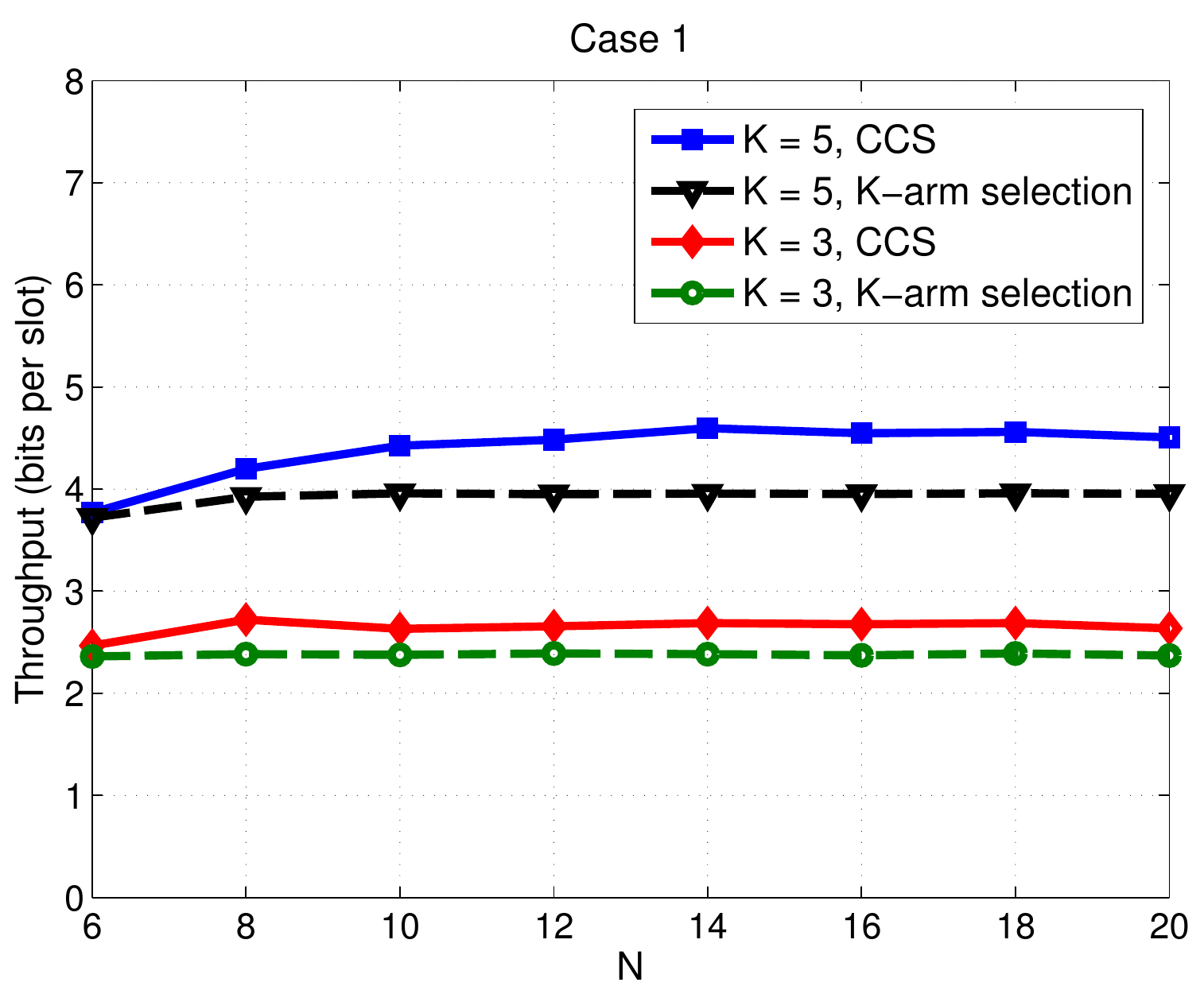}}
  \centerline{(a) Case $1$: $p_{10} = 0.42$ and $p_{00} = 0.82$}\medskip
\end{minipage}
\begin{minipage}[b]{1\linewidth}
  \centering
  \centerline{\includegraphics[width=0.95\textwidth]{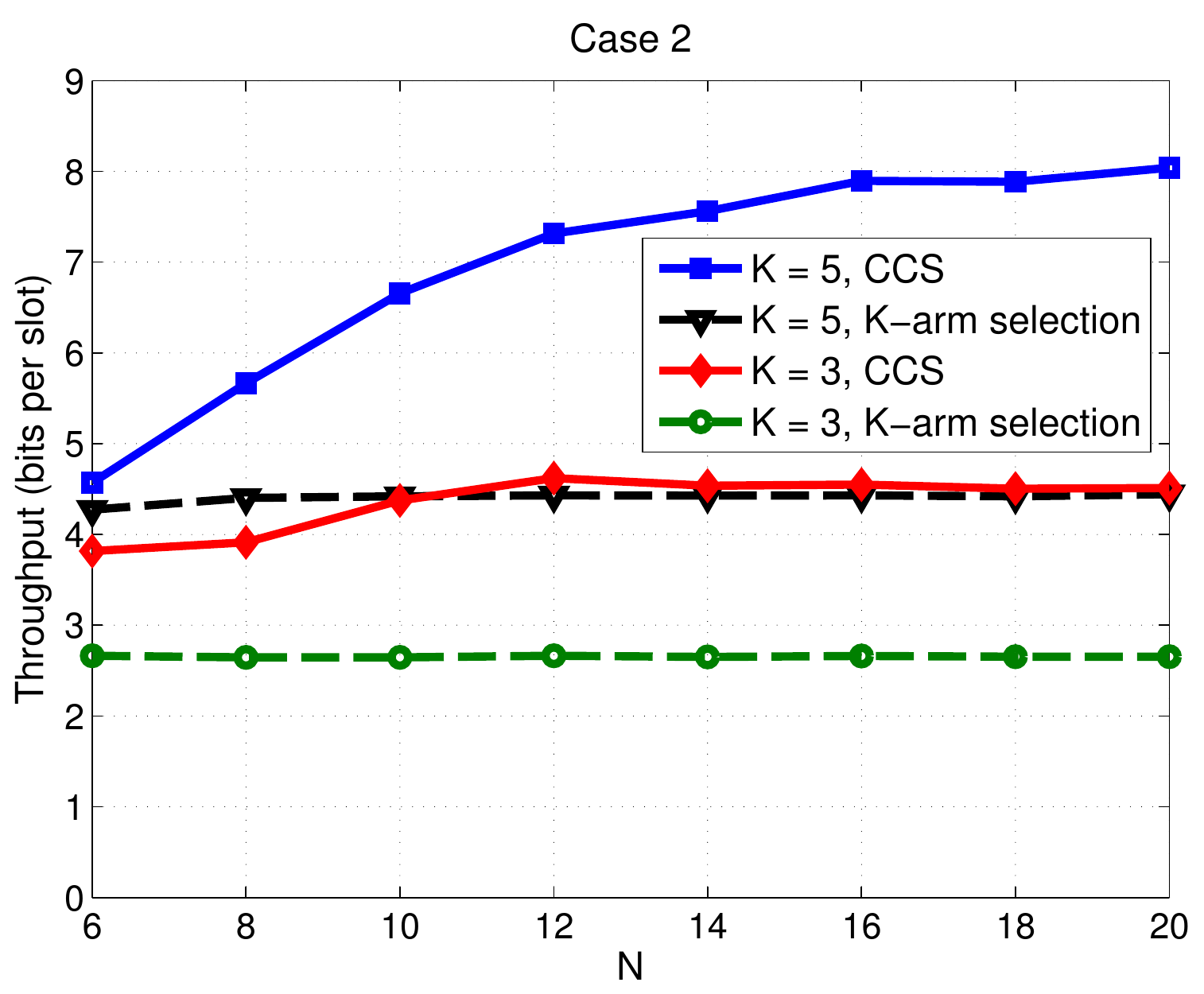}}
  \centerline{(b) Case $2$: $p_{10} = 0.4$ and $p_{00} = 0.9$}\medskip
\end{minipage}
\caption{Performance of the myopic policy in CCS and $K$-arm selection.}
\label{fig_cases}
\end{figure}

In the numerical experiments, we assume $T=30$ and the number of arms is equal to $K=3$ and $5$. We consider $N$ independent sub-channels with the same transition probabilities and bandwidth $B = 1$. The value of $N$ is set to vary from $6$ to $20$ and we compute the normalized expected total reward achieved over $500$ simulation trials. For better comparison and visualization reasons, the results are normalized by $T$ to reflect the throughput per slot. We consider two scenarios for the transition probabilities to capture the sparsity in spectrum occupancy and study the effect of the sparse channel occupancy on the performance of CCS. In Case $1$, we set the transition probabilities as $p_{10} = 0.42$ and $p_{00} = 0.82$, which in the steady state corresponds to spectrum occupancy rate of $30\%$. In Case $2$, we investigate a sparser scenario with transition probabilities $p_{10} = 0.4$ and $p_{00} = 0.9$, which in the steady state corresponds to channel occupancy rate of $20\%$.
 
In Fig. \ref{fig_cases}(a), the performance of myopic policy for CCS and $K$-arm selection are presented for $K=3, 5$ in Case $1$. The myopic policy in CCS outperforms the myopic policy in $K$-arm selection for all values of $N$ and for both $K=3$ and $5$. Fig. \ref{fig_cases}(b) shows the performance comparison for Case $2$. In this case, with sparser channel occupancy, the performance improvement is more significant. We also observe that myopic CCS with $K=3$ outperforms $K$-arm selection with $5$ arms when $N\geq 10$. The experiments showcase the capability of CCS architecture to improve the expected total throughput when the channel occupancy is sparse. Evidently, the myopic policy in CCS problem requires more processing and is more computationally extensive. However, our experiments demonstrate that in sparse enough settings (e.g. Case $2$), it can considerably enhance the expected throughput. 

\section{Conclusion}
In this paper, we combined the perspective of MAB with FRI sampling structure. We specifically formulated the selection of a compressive sensing {\it arm} with $K$ branches as a MAB problem. We assumed that when the number of active subbands in the selected subset to sense is limited by $K/2$, the states of the sensed sub-channels are perfectly identifiable. For the complexity reduced and noiseless CCS problem we considered in this work, the myopic policy was established and investigated numerically. The numerical experiments demonstrate that in finite horizon setting and when the channel occupancy is sparse, exploiting sparsity in CCS problem improves the expected total reward.

\bibliographystyle{IEEEbib}

\begin{thebibliography}{99}
\bibitem{survey}
Q. Zhao and B.M. Sadler, ``A survey of dynamic spectrum access,'' {\it Signal Processing Magazine, IEEE}, 24(3):79--89, May 2007.

\bibitem{Zhao}
Q. Zhao, L. Tong, A. Swami, and Y. Chen, ``Decentralized Cognitive MAC for Opportunistic Spectrum Access in Ad Hoc Networks: A POMDP Framework,'' {\it Selected Areas in Communications, IEEE Journal on}, 25(3):589--600, 2007.

\bibitem{Twkin}
C. Tekin and M. Liu, ``Online Learning in Opportunistic Spectrum Access: A Restless Bandit Approach,'' in Proc. of the 30th IEEE International Conference on Computer Communications (INFOCOM 2011), China, April, 2011.

\bibitem{Zhao2}
Q. Zhao, B. Krishnamachari, and K. Liu, ``On myopic sensing for multi-channel opportunistic access: structure, optimality, and performance,'' {\it IEEE Transactions on Wireless Communications}, 2008.

\bibitem{Ahmad}
S. Ahmad, M. Liu, T. Javidi, Q. Zhao, and B. Krishnamachari, ``Optimality of Myopic Sensing in Multichannel Opportunistic Access,'' {\it IEEE Transactions on Information Theory}, vol. 55, pp. 4040--4050, 2009.

\bibitem{Liu}
K. Liu and Q. Zhao, ``Distributed Learning in Multi-Armed Bandit With Multiple Players,'' {\it IEEE Transactions on Signal Processing}, vol. 58, no. 11, pp. 5667--5681, November, 2010.

\bibitem{Unnikrishnan}
J. Unnikrishnan and V. V. Veeravalli, ``Algorithms for Dynamic Spectrum Access with Learning for Cognitive Radio,'' {\it IEEE Transactions on Signal Processing}, 58 (2):750--760, Feb, 2010.

\bibitem{Ahmad2}
S. Ahmad, and M. Liu, ``Multi-channel Opportunistic Access: A Case of Restless Bandits with Multiple Plays,'' Forty-Seventh Annual Allerton Conference, Allerton House, UIUC, Illinois, USA, September 30 - October 2, 2009.

\bibitem{nonBay-1}
H. Liu, K. Liu, and Q. Zhao, ``Learning in A Changing World: Non-Bayesian Restless Multi-Armed Bandit,'' IEEE Int. Conf. on Acoustics, Speech, and Signal Processing (ICASSP 2011), May, 2011.

\bibitem{nonBay-2}
W. Dai, Y. Gai, B. Krishnamachari and Q. Zhao, ``The Non-Bayesian Restless Multi-Armed Bandit: a Case of Near-Logarithmic Regret,'' IEEE Int. Conf. on Acoustics, Speech, and Signal Processing (ICASSP 2011), May, 2011.

\bibitem{RMAB-1}
K. Liu and Q. Zhao, ``Indexability of Restless Bandit Problems and Optimality of Whittle Index for Dynamic Multi-channel Access,'' {\it IEEE Transactions on Information Theory}, vol. 56, no. 11, pp. 5547--5567, November, 2010.

%

\bibitem{CSS-1}
V. Havary-Nassab, S. Hassan, and S. Valaee, ``Compressive detection for wide-band spectrum sensing,'' IEEE Int. Conf. on Acoustics, Speech, and Signal Processing (ICASSP 2010), pages 3094--3097, 2010.

\bibitem{CSS-2}
M. Mishali and Y.C. Eldar, ``Wideband spectrum sensing at sub-nyquist rates,'' [applications corner], {\it Signal Processing Magazine, IEEE}, 28(4):102--135, 2011.

\bibitem{CSS-3}
Z. Tian and G.B. Giannakis, ``Compressed sensing for wideband cognitive radios,'' IEEE Int. Conf. on Acoustics, Speech, and Signal Processing (ICASSP 2007), volume 4, pages 1357--1360, 2007.

\bibitem{CSS-4}
Z. Tian, Y. Tafesse, and B.M. Sadler, ``Cyclic feature detection with sub-nyquist sampling for wideband spectrum sensing,'' {\it Selected Topics in Signal Processing, IEEE Journal of}, 6(1):58--69, 2012.

\bibitem{FRI-1}
I. Maravic and M. Vetterli, ``Sampling and reconstruction of signals with finite rate of innovation in the presence of noise,'' {\it Signal Processing, IEEE Transactions on}, 53(8):2788--2805, 2005.

\bibitem{FRI-2}
X. Li, A. Rueetschi, A. Scaglione, and Y. C. Eldar, ``Compressive Link Acquisition in Multiuser Communications,'' {\it Signal Processing, IEEE Transactions on} , vol.61, no.12, pp.3229--3245, June, 2013.

\bibitem{FRI-3}
E. Matusiak and Y.C. Eldar, ``Sub-nyquist sampling of short pulses,'' IEEE Int. Conf. on Acoustics, Speech, and Signal Processing (ICASSP 2011), pages 3944--3947, 2011.

\bibitem{FRI-4}
K. Gedalyahu, R. Tur, and Y.C. Eldar, ``Multichannel sampling of pulse streams at the rate of innovation,'' {\it Signal Processing, IEEE Transactions on}, 59(4):1491–1504, 2011.

\bibitem{CS-1}
M. Mishali and Y.C. Eldar, ``Blind multiband signal reconstruction: Compressed sensing for analog signals,'' {\it Signal Processing, IEEE Transactions on}, 57(3):993–1009, 2009.

\bibitem{CS-2}
P. Shukla and P.L. Dragotti, ``Sampling schemes for multidimensional signals with finite rate of innovation,'' {\it Signal Processing, IEEE Transactions on}, 55(7):3670--3686, 2007.

\bibitem{CS-3}
J.A. Tropp, J.N. Laska, M.F. Duarte, J.K. Romberg, and R.G. Baraniuk, ``Beyond nyquist: Efficient sampling of sparse bandlimited signals,'' {\it Information Theory, IEEE Transactions on}, 56(1):520--544, 2010.
%

\bibitem{CS_theory}
M.A. Davenport, M.F. Duarte, YC Eldar, and G. Kutyniok, ``Introduction to compressed sensing,'' in Compressed Sensing: Theory and Applications, Cambridge University Press, 2012. 


\bibitem{edmond}
J. Edmonds, ``Matroids, submodular functions and certain polyhedra,'' Combinatorial Structures and Their Applications, pp.69--87, 1970.


\end{thebibliography}

\end{document}